\newcommand{\ud}{\;\mathrm{d}}
\newcommand{\uS}{\;\mathrm{dS}}
\newcommand{\uud}{\mathrm{d}}
\providecommand{\St} {\mathbb{S}^2}
\providecommand{\DD}{\mathbb{D}}
\providecommand{\eps}{{r}}
\providecommand{\supp}{\operatorname{supp}}
\providecommand{\exp}{\operatorname{exp}}
\newtheorem{theorem}{Theorem}[section]
\newtheorem{corollary}[theorem]{Corollary}
\newtheorem{proposition}[theorem]{Proposition}
\newtheorem{definition}[theorem]{Definition}
\newtheorem{lemma}[theorem]{Lemma}
\newtheorem{remark}[theorem]{Remark}
\numberwithin{equation}{section}
\numberwithin{theorem}{section}
\newcommand{\qed}{\hfill$\Box$}
\newenvironment{proof}{\begin{trivlist}\item[]{\em Proof:}\/}{\qed\end{trivlist}}
\newenvironment{proofof}[1][Proof]{\noindent \textit{#1.} }{\ \qed}
\newcommand{\Reals}{{\mathbb R}}
\newcommand{\Naturals}{{\mathbb N}}
\newcommand{\cf}{{\mathbbm 1}}
\newcommand{\Exp}{\operatorname{Exp}}
\newcommand{\Uni}{\operatorname{Uni}}
\definecolor{darkblue}{rgb}{0,0,0.7}
\definecolor{darkgreen}{rgb}{0,0.5,0}
\title{Improved geometric and recollision estimates for the invariance principle of the random Lorentz gas }
\date{\today}
\newcommand{\email}[1]{E-mail: \tt #1}
\newcommand{\emailKarsten}{\email{k.matthies@bath.ac.uk}}
\newcommand{\emailraphael}{\email{winterr6@cardiff.ac.uk}}
\newcommand{\UBaddress}{\em University of Bath
	Department of Mathematical Sciences, UK}
\newcommand{\CUaddress}{\em Cardiff University, School of Mathematics, UK}
\author{
Karsten Matthies  \thanks{{\emailKarsten.} }, Raphael Winter \thanks{\emailraphael} \\[1em]
$\,^*$\UBaddress \\[0.5em]
$\,^\dag$\CUaddress}
\date{\today}
\begin{document}

\maketitle
\begin{abstract}
  By improving geometric recollision estimates for a random Lorentz gas, we extend the timescale $T(r)$ of the invariance principle for a Lorentz gas with particle size $r$ obtained by Lutsko and Toth (2020) from $\lim_{r \rightarrow 0} T(r) r^{2} |\log(r)|^2 =0$  to $\lim_{r \rightarrow 0} T(r) r^{2} |\log(r)| =0$. We show that this is the maximal reachable timescale with the coupling of stochastic processes introduced in the original result. In our improved geometric estimates we make use of the convexity of scatterers to obtain better dispersive estimates for the associated billiard map. We provide additional estimates which potentially open the possibility to reach, with a more elaborate coupling argument, up to timescales just below of $T(r)\sim r^{-2}$ when recollision patterns of arbitrary length occur.
\end{abstract}


\section{Introduction}

The dynamical behaviour of a hard-sphere flow between fixed scatterers is a classical topic in the theory of billiards as well as in mathematical statistical physics for the emergence of diffusion. In this paper, we use billiard maps to extend the time scales for which diffusive behaviour can be established.

We consider the random Lorentz gas in $\Reals^3$, i.e. balls of radius $r$, infinite mass and with centres distributed by a Poisson point process of intensity $\rho$ such that $r^3 \rho$ is sufficiently small. We are interested in the trajectory $t \mapsto X^{r,\rho} \in \Reals^3$ of a point particle starting at the origin in a random unit direction, with free flights between collisions and elastic collisions with the scatterers.   For fixed times $t \in [0,T]$, the process  $t \mapsto X^{r,\rho}$ converges in a Boltzmann-Grad scaling ($r\to 0$, $\rho \to \infty$ and $r^2 \rho \to 1$)
to a Markovian flight process and the distribution can be described by kinetic equations as shown originally by Gallavotti \cite{Gal70}, Spohn \cite{Spohn78} and Boldirghini-Munimovich-Sinai \cite{BBS83}. For more recent results, see Golse~\cite{golse22} and references therein. The main difference between the random Lorentz gas and a random flight process is that the latter is independent of its past, so it ignores the positions of previous scatterers and it allows collisions at places where in the past had not been any scatterers (\emph{shadowed scattering}). Additionally, the point particle in a Lorentz gas can have \emph{recollisions} with previous scatterers. These are the two main effects that need to be controlled in convergence proofs.   

Diffusive behaviour can be observed for the free-flight process after a further rescaling 
\begin{align}
    \label{eqn:diffscal} t \mapsto \frac{X(Tt)}{\sqrt{T}} \quad \mbox{ for }    T \to \infty,
\end{align}
such that even better control of shadowed scattering and recollisions is necessary when this is to be extended to the Lorentz gas. Lutsko and T\'oth \cite{LT20} prove an invariance principle (i.e. diffusive behaviour in the \eqref{eqn:diffscal} scaling) in a combined Boltzmann-Grad and diffusive limit $r\to 0$, $\rho \to \infty$, $r^2 \rho \to 1$ and $T(r) \to \infty$, such that 
$\lim_{r \rightarrow 0} T(r) r^{2} |\log(r)|^2 =0$.
The convergence holds when averaging over the random initial velocity and random placement of scatterers.
The key technical innovation is to use probabilistic coupling arguments involving three processes: the Lorentz gas $X$, the Markovian flight process $Y$ and an auxiliary process $Z$ that has some memory, but only until the previous collision event. Lutsko and T\'oth combine the probabilistic arguments with an analysis of the three dimensional geometry of the tracer particle bouncing between two scatterers of the same size. 

In this paper, we give improved estimates by analysing a billiard map on the two scatterers and we extend the time-scale of the  combined Boltzmann-Grad and diffusive limit to
$T(r) \to \infty$ under the condition  $\lim_{r \rightarrow 0} T(r) r^{2} |\log(r)| =0$. 
We note that Lutsko and T\'oth achieved such an improved time-scale in the simpler geometry of the wind-tree process \cite{Toth21}. 
In our analysis of the Lorentz gas, a separation of the two cases, where the distance between the scatterers is  smaller or larger than a fixed multiple of their radius, is advantageous. Several effects  that are estimated to be of
order $r^{2} |\log(r)|^2$ in \cite{LT20} will be shown to be of order $r^2$. We do not re-analyse other effects that lead to errors of $r^{2} |\log(r)|$. 
This immediately allows us to use the same methods as in \cite{LT20}  as well as gaining further insights about the dynamic behaviour. In particular, we also show that after a simple recollision the exit velocity distribution regains independence of the initial velocity  and is $\Uni(\St)$ distributed in the critical case when the distance between the scatterers is large, see Lemma~\ref{lem:independence}.

We stress the fact that the result presented here realizes the maximal timescale attainable with bounds on direct shadowing/recollision events. More precisely, the coupling of the $Z$-process and the Lorentz gas process $X$ fails beyond the timescales determined by  $\lim_{r \rightarrow 0} T(r) r^{2} |\log(r)|=0$. This follows from the observation that mechanical inconsistencies after two collisions (i.e. indirect) appear beyond this timescale and are not respected by the $Z$-process, see Remark~\ref{rem:indirect}. The same holds for inconsistencies between two subsequent flights in the presence of a direct shadowing event. 
An interesting question for future research is whether a more elaborate auxiliary process, possibly one having memory of the last two scatterers, would allow us to extend the results to times $T(t)$ under the condition $\lim_{r \rightarrow 0} T(r) r^{2} =0$.  Finally, we recall that on timescales 
$T(r)\sim r^{-2}$ complex recollision patterns of arbitrary length seem to be present that likely require entirely new tools, see also 
\cite[Remarks on dimension (3)(b)]{LT20}.

\paragraph{Related work:}
The statistical and long-term diffusive behaviour has been analysed for various models related to the random Lorentz gas. T\'oth recently showed the invariance principle for almost all realisations just suitably averaging over initial velocities \cite{toth25}. Lutsko and T\'oth adapted their ideas to the wind-tree process \cite{Toth21} and to Lorentz gases with an underlying magnetic field \cite{lutsko24}. Nota, Nowak and Saffirio also analyse the magnetic Lorentz gas in \cite{nota24}. Different scaling behaviour for the Lorentz gas can be observed for periodically placed scatterers, see the works by Caglioti-Golse \cite{Cag10} and Marklof-Str\"ombergson \cite{marklof08,Marklof11,Marklof-24}. In this case, a kinetic description by linear Boltzmann equations is not possible, and for longer times a superdiffusive behaviour can be observed \cite{Marklof16} as well as a non-standard central limit theorem \cite{Balint23}. However, Wennberg was able to derive a linear Boltzmann equation for a Lorentz Gas with nearly periodic scatterer distribution, see \cite{wennberg23}. Diffusive behaviour has also been shown for a particle in a suitable random forcefield on $\Reals^d$ for $d\geq 3$ in \cite{Durr-87}  and on $\Reals^2$ in \cite{Kesten81}. 

In an idealised Rayleigh gas the background particles are of finite mass and move, but do not interact with each other. In the kinetic limit, a linear Boltzmann equation can be derived \cite{spohn91} for hard-sphere interactions, for a general conservative situation see \cite{Matthies18} and with annihilation \cite{Nota19}. In the conservative setting, if the background particles are Maxwell-distributed, diffusive behaviour can be observed on time-scales below $\eps^{-1/8}$, see \cite{MatthiesSyntaka2024surrey}. Fractional diffusion can be derived for suitable short-range interactions between the tagged particle and background particles  with a fat-tailed distributions for time-scales below $\eps^{-\alpha}$ for some small $\alpha>0$, see \cite{matthies24}.

In a non-idealised Rayleigh gas, the background particles  interact with each other, but are assumed to be in equilibrium. In this considerably more complex system, diffusive behaviour has been shown in \cite{bod15brown} for a double logarithmic timescale, which has recently been improved to $(\log|c_{\beta} \log \eps| )^{1/2-\alpha}$  in \cite{Fou24}.   More general fluctuations around the equilibrium in the full hard sphere 
flow lead to fluctuating Boltzmann equations, see the recent woks by Bodineau, Gallagher, Saint-Raymond, and Simonella
\cite{Bodineau23,Bodineau23-2,Bodineau24}. For more details on scaling limits of tagged particles in random fields, see~\cite{NSW21,NSW22}.

\paragraph{Plan of paper:} We introduce the setup for direct recollisions in the next section. The main improved geometric estimate is given in Subsection~\ref{ssec:impro}, this will yield a better control of the probability of scattering events in Subsection~\ref{ssec:prob}. Additional independence results are shown in Subsection \ref{ssec:indep}.  The results are applied to the invariance principle in Section~\ref{sec:LT} by sketching the relevant improvements of~\cite{LT20}.

\section{Recollisions of the tracer particle with two scatterers}

In this section, we present the geometric improvements on the probability of collision events. In order to keep the presentation self-contained, we review the main elements of the geometric argument introduced in~\cite{LT20}. We use variables $(u,\xi,v) \in  \DD=\St\times \Reals^+ \times \St $
to parametrise possible direct recollisions, i.e.  backscattering of the tracer particle to an obstacle after the subsequent collision. Without loss of generality we fix the incoming velocity for the first collision to be the first unit vector $e=e_1\in \Reals^3$. In the set of variables $(u,\xi,v) \in  \DD=\St\times \Reals^+ \times \St $, the velocity $u$ then represents the velocity after the first collision, $v$  the velocity after the second collision, and $\xi$ the free flight time in between.  

\begin{figure}
    \centering
    \includegraphics[width=\linewidth]{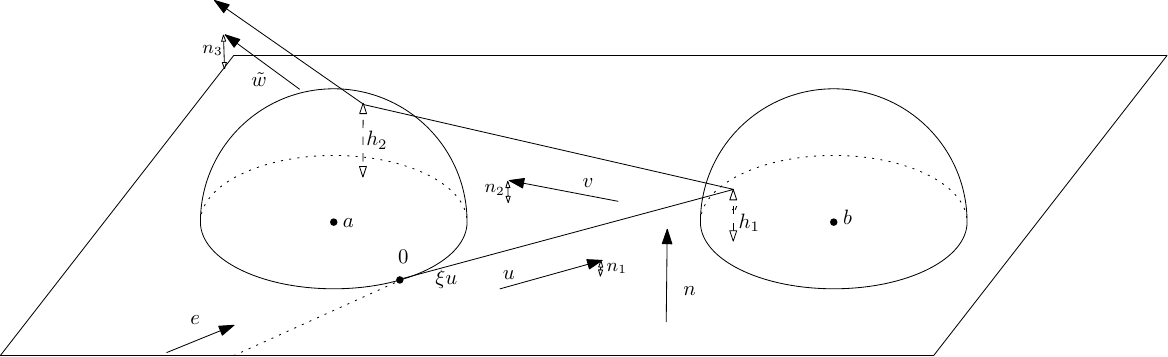}
    \caption{Notation for the motion between two scatterers.}
    \label{fig:setup}
\end{figure}

\begin{definition}[Recollision events] \label{def:recollisions}
Given a radius $r>0$, a recollision event is determined by $(u,\xi,v)\in \DD $, where $\DD=\St\times \Reals^+ \times \St$.  For $(u,\xi,v)\in \DD $,  we define the process $\tilde{Z}_r(t)\in \Reals^3$, $t\in \Reals$ by
\begin{enumerate}
	\item $\tilde{Z}_r(0) = 0$, and $\tilde{Z}_r(t) = te$ for all $t<0$, where $e=e_1$ is the first coordinate vector.
	\item $\tilde{Z}_r(t)= tu$, for $t\in (0,\xi)$.
	\item The centres $a,b\in \Reals^3$ of the two obstacles are defined by
	\begin{align} \label{eq:centers}
		a:= r \frac{e-u}{|e-u|}, \quad b:= \xi u + r\frac{u-v}{|u-v|}.
	\end{align}
	\item $\tilde{Z}_r(t)$ for $t\geq \xi$ given by dynamics of a point particle interacting with the infinite hard-sphere obstacles $B_r(a)$ and $B_r(b)$ with $\dot{\tilde{Z}}_r(\xi^+)=v$.
	\item Let $N$ be the total number of collisions of this process, and denote by $w_k$, $0 \leq k\leq N$ the velocity after the $k$-th collision with $w_0=e$, $w_1=u$, $w_2=v$. Similarly, define $0=\tau_1 \leq \tau_1\leq \cdots \leq \tau_N$ as the collision times of the process.
	\item We denote by $\tilde{\beta}$ the time of the last collision and $\tilde{w}=w_N$ the outgoing velocity.
\end{enumerate} 

By rescaling, we will later reduce the analysis to the case $r=1$, so we introduce the short-hand $\tilde{Z} = \tilde{Z}_1$.
\end{definition}
 Notice that the position of the second scatterer is determined by $(u,\xi,v)\in \DD$ and $r>0$, and we do not exclude the possibility of overlap between the scatterers. In the next lemma, we summarize the geometric arguments used in~\cite{LT20} to estimate the probability of specific recollision events.

 \begin{lemma} \label{lem:LTgeom} Let $r=1$ and $(u,\xi,v) \in \DD$. Define $n\in \St$ to be a unit vector perpendicular to the plane generated by the points $0,a,b$, more precisely
 \begin{align}\label{def:n}
 	n:= \frac{a \times b}{|a| |b| \sin(\angle(a,b))},
 \end{align}
and $n_k := w_k \cdot n$ the velocity component along this direction. Then the following holds:
\begin{enumerate}
	\item $n_k$ is monotone increasing if $n_1\geq 0$, i.e. $n_k\geq n_{k-1}$ for $1\leq k\leq N$.
	\item We have the following bounds for the trapping time and angles
	\begin{align}
		\tilde{\beta} &\leq \xi  + |w_1|^{-1},  \label{ineq:beta}\\
		\angle(-e,w_j)&> \tfrac{\pi}2 - \angle (n,w_j), \label{ineq:angle} \\
		|n_2| &= |v \cdot n|\geq \tfrac12 |e \cdot (u \times v)| \label{ineq:vertical}.
	\end{align}
	\item \label{it: uTimesv} If $u,v$ are i.i.d. $\Uni(\St)$ distributed, then $w:=\frac{u\times v}{|u\times v|}$ and $\vartheta:=|u\times v|$ are independent and $w$ is again $\Uni(\St)$ distributed, while $\vartheta\sim \cf_{[0,1]} \frac{t}{\sqrt{1-t^2}} dt$.
\end{enumerate}
 \end{lemma}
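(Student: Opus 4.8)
The plan is to prove Part~1 first and use it throughout Part~2, while Part~3 is a separate elementary computation. For Part~1, the key observation is that the reflection law at a sphere changes the velocity only along the line joining the sphere's centre to the collision point, and both centres $a,b$ lie in the plane $\Pi:=\operatorname{span}(a,b)$ to which $n$ is orthogonal. Writing $z(t):=\tilde Z(t)\cdot n$, suppose the $(k{+}1)$-st collision is with the sphere of centre $c\in\{a,b\}$ at the point $p_{k+1}$, with outward unit normal $\nu=p_{k+1}-c$ (recall $r=1$). Then $w_{k+1}=w_k-2(w_k\cdot\nu)\nu$, and since $c\cdot n=0$,
\begin{equation*}
  n_{k+1}-n_k=-2\,(w_k\cdot\nu)\,(\nu\cdot n)=-2\,(w_k\cdot\nu)\,(p_{k+1}\cdot n);
\end{equation*}
as the particle hits the sphere from outside, $w_k\cdot\nu<0$, so $n_{k+1}-n_k$ has the same sign as $z(\tau_{k+1})$. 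I would conclude by induction: $z$ is piecewise linear with slope $n_k$ on $[\tau_k,\tau_{k+1}]$ and $z(\tau_1)=z(0)=0$, so if $0\le n_1\le\dots\le n_k$ and $z(\tau_j)\ge0$ for all $j\le k$, then $z(\tau_{k+1})=z(\tau_k)+(\tau_{k+1}-\tau_k)n_k\ge0$, hence $n_{k+1}\ge n_k$; the base case is just $n_1\ge0$ and $z(\tau_1)=0$ (the first step using $z(\tau_2)=\xi n_1$).

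For Part~2, note first that $e-u\parallel a\in\Pi$, so $e\cdot n=u\cdot n=n_1$. I would obtain \eqref{ineq:angle} from the spherical triangle inequality $\angle(-e,n)\le\angle(-e,w_j)+\angle(w_j,n)$: if $n_1\ge0$ then $\angle(-e,n)\ge\pi/2$, which gives the claim (in the borderline case $n_1=0$ all $n_j$ vanish by Part~1, so it reduces to $w_j\ne-e$), whereas if $n_1<0$ then Part~1 applied to $-n$ gives $n_j<0$, hence $\angle(n,w_j)>\pi/2$ and the right-hand side of \eqref{ineq:angle} is negative. For \eqref{ineq:vertical} I would expand the triple product: with $a=(e-u)/|e-u|$, $b=\xi u+(u-v)/|u-v|$ and $v\cdot(e\times v)=v\cdot(u\times v)=0$,
\begin{equation*}
  v\cdot(a\times b)=\frac{1}{|e-u|}\Bigl(\xi+\tfrac1{|u-v|}\Bigr)\,e\cdot(u\times v),
\end{equation*}
then divide by $|a\times b|$ and bound the latter from above by elementary means (for instance $|a\times b|\le|b|$ with $|b|^2=\xi^2+\xi|u-v|+1$, together with $|e-u|\le2$, $|u-v|\le2$). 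For \eqref{ineq:beta} I would use convexity: since a line segment meets a convex body in a segment, consecutive collisions alternate between the two spheres, and to collide again after the second collision the particle must head back towards the other sphere; both spheres lie in the slab $\{\,|x\cdot n|\le1\,\}$, while by Part~1 the coordinate $z$ is nondecreasing with slope $\ge|n_2|$ for $t\ge\tau_2$, so once $z$ leaves $[-1,1]$ no further collision is possible, which controls $\tilde\beta-\xi$ in terms of $|n_2|$ — hence, via \eqref{ineq:vertical}, in terms of $|e\cdot(u\times v)|$.

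Part~3 is a direct computation. Conditioning on $u$ and using rotational invariance, take $u=e_3$ and write $v=(\sin\phi\cos\psi,\sin\phi\sin\psi,\cos\phi)$, where $\phi,\psi$ are independent with $\psi$ uniform on $[0,2\pi)$ and $\phi$ of density $\tfrac12\sin\phi$ on $[0,\pi]$. Then $u\times v=\sin\phi\,(-\sin\psi,\cos\psi,0)$, so $\vartheta=\sin\phi$ and $w=(-\sin\psi,\cos\psi,0)$ are independent, $w$ is uniform on the great circle $u^\perp\cap\St$, and the two-to-one change of variables $t=\sin\phi$ on $[0,\pi]$, with $|dt/d\phi|=\sqrt{1-t^2}$, gives $\vartheta\sim\cf_{[0,1]}\tfrac{t}{\sqrt{1-t^2}}\,dt$. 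Averaging over $u$ preserves the independence (the conditional law of $\vartheta$ is the same for every $u$) and makes $w$ uniform on $\St$, because the mixture $\int\Uni(u^\perp\cap\St)\,d\Uni(u)$ is rotationally invariant.

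The step I expect to be the main obstacle is \eqref{ineq:beta}: turning the qualitative escape picture into the stated quantitative bound requires the full two-sphere billiard geometry of \cite{LT20} — the interplay between the alternation of collisions, the monotonicity of the $n_k$, and the size of $|n_2|$ — and, as noted in the introduction, separating the cases $|a-b|$ small and $|a-b|$ large is convenient here.
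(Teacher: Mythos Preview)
The paper does not itself prove this lemma --- it is presented as a summary of results from~\cite{LT20} and no argument is given --- so there is no in-paper proof to compare against. Your reconstruction recovers the content correctly in outline; Parts~1 and~3 are complete and correct as written (the inductive mechanism in Part~1, with $n_{k+1}-n_k=-2(w_k\cdot\nu)(p_{k+1}\cdot n)$ and $w_k\cdot\nu<0$, is exactly the point, and your Part~3 computation is the standard one, including the averaging step over~$u$).

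Two remarks on Part~2. For \eqref{ineq:vertical}, your triple-product identity is exact and the method is right, but the particular ``for instance'' bound you suggest does not quite deliver the constant~$\tfrac12$: with $|a\times b|\le|b|\le\xi+1$, $|e-u|\le2$ and $1/|u-v|\ge\tfrac12$ one obtains only
\[
|v\cdot n|\ \ge\ \frac{\xi+\tfrac12}{2(\xi+1)}\,|e\cdot(u\times v)|,
\]
which falls short when $\xi$ is small and $|u-v|$ is near~$2$. A slightly sharper estimate on $|e-u|\,|a\times b|$ (for instance via $|a\times b|^2=|b|^2-(a\cdot b)^2$) is what closes this gap in~\cite{LT20}. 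For \eqref{ineq:beta}, note that as printed $|w_1|^{-1}=|u|^{-1}=1$, which cannot be the intended bound (already $N=3$ and $\xi\ge10$ force $\tilde\beta\ge\tfrac32\xi$); this is evidently a transcription slip, and comparison with the use of $\tilde\beta\le 2Nh$ and $n_{N-1}h\le1$ in the proof of Lemma~\ref{lem:thmrescaled} shows the intended quantity is $|n_1|^{-1}$ (equivalently $|n_2|^{-1}$, since $n_1=n_2$ when $n_1=e\cdot n=u\cdot n$ is read correctly). Your slab-escape argument is precisely the mechanism behind that corrected bound: both scatterers lie in $\{|x\cdot n|\le1\}$, and $z(t)=\tilde Z(t)\cdot n$ is nondecreasing with slope $\ge|n_2|$ for $t\ge\tau_2$, so $\tilde\beta-\xi\le 2|n_2|^{-1}$ follows directly. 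The ``full two-sphere billiard geometry'' you anticipate as an obstacle is therefore not actually needed here beyond what you already wrote.
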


 \subsection{Improved geometric estimates}\label{ssec:impro}

 The approach in this contribution is based on making quantitative improvements on the estimates in~\cite{LT20}. In Lemma~\ref{lem:LTimproved}, we give improved estimates for the probability of certain recollision events depending on the total number of collisions $N$ and the free flight time $\xi$ after the first collision. Rescaling allows us to restrict, without loss of generality, to the case $r=1$.   
 
 For recollisions with short free flight time, meaning $\xi>0$  small, the two scatterers can be close, touch, or even overlap. For such events, we will use the bounds introduced in~\cite{LT20} and reviewed above. On the other hand, for $\xi \rightarrow \infty$ we introduce the new estimates that allow us to eliminate the logarithmic divergence in the resulting estimate. Hence, we now restrict our attention to large flight times $\xi\geq 10$, where the value $10$ is chosen for convenience.

 We also remark that the estimate by Lutsko and T\'oth is sharp when $N=3$ is the total number of collisions introduced in Definition~\ref{def:recollisions}. Therefore, we will primarily consider the case $\xi\geq 10$ and $N  \geq 4$. Here the convexity of the spheres leads to improved estimates on the components $n_i$ allowing for sharper estimates on the trapping time. The estimates are closely related to the dispersive properties of Sinai-billiards introduced in~\cite{Sinai70}. The following Lemma encodes the main gain in the scattering estimates due to the convexity of the obstacles.

 \begin{lemma} \label{lem:LTimproved}
 	Let $(u,\xi,v)\in \DD$ be such that the total number of collisions satisfies $N\geq 3$, and further assume $\xi\geq 10$, and $n_1\geq 0$. Recall the definition of $n$ in~\eqref{def:n} and let $h_i := \tilde{Z}(\tau_i) \cdot n$, where $0=\tau_1 \leq \tau_1\leq \cdots \leq \tau_N$ are the collision times. Then the following inequalities hold
 	\begin{align} \label{geometric}
 		\begin{rcases}
 		h_{k+1} &\geq h_k + \frac12 \xi  n_k, \\
 		n_{k+1} &\geq n_k + \frac12 h_{k+1},
 		\end{rcases} \quad  \text{for } 1\leq k\leq N-1.
 	\end{align}
 \end{lemma}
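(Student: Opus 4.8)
The plan is to reduce both inequalities to elementary properties of the billiard flow between the two unit balls $B_1(a)$ and $B_1(b)$, and then to extract the second one from a dispersive (Sinai-type) estimate.

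First I would collect the geometric preliminaries. Since $\xi\geq 10$, the formula \eqref{eq:centers} gives $|a-b|\geq\xi-2\geq 8$, so the two closed unit balls are disjoint and at mutual distance $|a-b|-2\geq\xi-4$. Because the scatterers are strictly convex, immediately after each collision the velocity points strictly out of the ball that was hit, so the particle cannot re-hit the same ball before hitting the other one; hence collisions alternate between $B_1(a)$ and $B_1(b)$, and every free flight $\ell_k:=\tau_{k+1}-\tau_k$ is either the first one (of length exactly $\xi$) or joins the two ball surfaces, so that $\ell_k\geq\xi-4\geq\tfrac12\xi$ in all cases. Finally, the monotonicity part of Lemma~\ref{lem:LTgeom} together with $n_1\geq 0$ gives $0\leq n_1\leq n_2\leq\dots\leq n_N$. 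The first inequality is then immediate: between collisions the motion is rectilinear at unit speed, so $h_{k+1}=h_k+\ell_k n_k$, and $\ell_k\geq\tfrac12\xi$, $n_k\geq 0$ yield $h_{k+1}\geq h_k+\tfrac12\xi\,n_k$; in particular $(h_k)_k$ is non-decreasing, and since $h_1=\tilde Z(\tau_1)\cdot n=0$ we obtain $h_k\geq 0$ for all $k$.

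For the second inequality I would record a reflection identity. At the $(k+1)$-th collision the struck ball has centre $c\in\{a,b\}$, which lies in the plane through $0,a,b$ and hence satisfies $c\cdot n=0$; therefore the outward unit normal $\nu:=\tilde Z(\tau_{k+1})-c$ obeys $\nu\cdot n=h_{k+1}$. Inserting the elastic reflection law $w_{k+1}=w_k-2(w_k\cdot\nu)\nu$ into the $n$-component gives
\[
 n_{k+1}=n_k-2(w_k\cdot\nu)\,h_{k+1}=n_k+2\cos\theta_{k+1}\,h_{k+1},
\]
where $\theta_{k+1}\in[0,\tfrac{\pi}{2}]$ is the angle of incidence, $\cos\theta_{k+1}=-w_k\cdot\nu>0$. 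Since $h_{k+1}\geq 0$, the second claimed inequality is equivalent to the dispersive bound $\cos\theta_{k+1}\geq\tfrac14$ (and is trivial when $h_{k+1}=0$). To prove this bound I would use convexity quantitatively: for a collision $k+1$ with both a predecessor and a successor, i.e. $2\leq k+1\leq N-1$, the triangle with vertices $\tilde Z(\tau_k),\tilde Z(\tau_{k+1}),\tilde Z(\tau_{k+2})$ has its two sides at $\tilde Z(\tau_{k+1})$ of lengths $\ell_k,\ell_{k+1}\geq\tfrac12\xi$, while the opposite side has length $|\tilde Z(\tau_k)-\tilde Z(\tau_{k+2})|\leq 2$ because $\tilde Z(\tau_k)$ and $\tilde Z(\tau_{k+2})$ lie on the same ball. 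Since $w_k\cdot w_{k+1}=1-2(w_k\cdot\nu)^2=-\cos(2\theta_{k+1})$, the interior angle at $\tilde Z(\tau_{k+1})$ equals $2\theta_{k+1}$, so the law of cosines together with $2\ell_k\ell_{k+1}\leq\ell_k^2+\ell_{k+1}^2$ yields
\[
 \cos(2\theta_{k+1})\;\geq\;1-\frac{2}{\ell_k\ell_{k+1}}\;\geq\;1-\frac{8}{\xi^2},
\]
whence $\sin\theta_{k+1}\leq\tfrac15$ and $\cos\theta_{k+1}\geq\tfrac14$ for $\xi\geq 10$. This already covers $k+1=2$ (the hypothesis $N\geq 3$ makes collision $3$ available), so only the last collision $k+1=N$ remains.

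The last collision is where I expect the real difficulty, since the triangle argument has no successor to invoke. I would argue that applying the bound just obtained at the penultimate collision (legitimate because $N\geq 3$) forces $\cos\theta_{N-1}$ close to $1$, i.e.\ collision $N-1$ is nearly head-on with $|w_{N-1}+w_{N-2}|=2\sin\theta_{N-1}\leq\tfrac25$, so the particle leaves collision $N-1$ almost retracing its incoming segment and meets $B_1(b)$ again near $\tilde Z(\tau_{N-2})$; one then compares $\theta_N$ with the emergence angle $\theta_{N-2}$ at collision $N-2$, which is small for $N\geq 4$ by the triangle bound and, in the base case $N=3$, is controlled directly through \eqref{eq:centers}. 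The delicate point — the step I would spend the most care on — is that the inter-ball distance is of order $\xi$, so the small angular defect $2\theta_{N-1}$ of this near-reversal could a priori blow up into an $O(1)$ error in the impact parameter at collision $N$; closing this gap requires tracking the geometry of the near-head-on return precisely rather than relying on a soft continuity argument.
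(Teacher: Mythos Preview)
Your treatment of the first inequality and of the second inequality at the intermediate collisions $2\le k+1\le N-1$ is correct and is, up to presentation, the paper's own argument. The paper also bounds the incidence angle at a collision by observing that the free flight immediately before it and the one immediately after start and end on the \emph{same} unit ball, so their vector sum has length at most $2$; from $T_{\min}\gtrsim\tfrac12\xi$ it deduces $|w_k+w_{k+1}|\le \tfrac14$ and hence $-w_k\cdot V\ge \tfrac78$, which is exactly the content of your law-of-cosines computation. The paper gives no separate treatment of the last collision $k+1=N$: its two-flight argument (whose index labels are somewhat inconsistent in the printed proof) tacitly needs a successor collision, so it too only covers $1\le k\le N-2$. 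You are not missing an idea that the paper supplies.

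Your instinct that the case $k+1=N$ is genuinely different is right, and your near-retracing strategy cannot be completed there, because the second inequality is in fact \emph{false} at the last collision. For $N=3$ this is precisely the obstruction you flag in your final sentence: after the nearly head-on collision $2$ one has $v\approx -u$, but the angular defect $|u+v|=O(\xi^{-1})$ is amplified along the $O(\xi)$ return flight to an $O(1)$ displacement of $\tilde Z(\tau_3)$ on $B_1(a)$, so nothing prevents collision $3$ from being grazing. Concretely, formula~\eqref{eq:centers} places no constraint on $v\in\St$, and one may take $v$ on the boundary of the tangent cone from $\xi u$ to $B_1(a)$; then $\cos\theta_3$ is arbitrarily small while $h_3>0$ (as soon as $n_1>0$), and after this grazing reflection $w_3\approx v$ escapes without meeting $B_1(b)$, so indeed $N=3$. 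Thus $n_3\ge n_2+\tfrac12 h_3$ fails. The claim you should aim to establish is the one the paper's argument actually delivers, namely the second line of~\eqref{geometric} for $1\le k\le N-2$; do not try to close the last-collision case.
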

\begin{proof}
	For the first inequality, we first observe $n_1\geq 0$ and $n_k$ monotone increasing implies $h_k\geq0$ is increasing as well. Since $\xi\geq 10$, the free flight time is bounded below by
	\begin{align*}
		\tau_{k+1}-\tau_k \geq \frac12 \xi,
	\end{align*}
	which implies the first inequality (notice that this lower bound does not hold for $\xi$ approaching zero). For the second inequality we use that the flights
	$F_{k+1} = (\tau_{k+1}-\tau_k)w_{k+1}$ and $F_k = (\tau_{k}-\tau_{k-1})w_k $ need to satisfy
	\begin{align*}
		|F_{k+1} + F_k|\leq 2,
	\end{align*}
	to lead back to the same scatterer after two consecutive flights. Using the notation $T_k=\tau_k-\tau_{k-1}$ and  $T_{k+1}=\tau_{k+1}-\tau_{k}$, we write
    \begin{align*}
        |F_{k+1}+ F_k| = |T_{k+1}\wedge T_k (w_{k+1}+w_k) + (T_k \vee T_{k+1} - T_k\wedge T_{k+1}) w_*|,
    \end{align*}
    where $w_*$ is the velocity corresponding to the larger of $T_k$, $T_{k+1}$. 
    Since $|w_k|=|w_{k+1}|$, we have 
    \begin{align*}
        T_{k+1}\wedge T_k (w_{k+1}+w_k) \cdot w_* \geq 0,
    \end{align*}
    so we obtain
	\begin{align} \label{eq:flightanglebd}
		8 |w_{k+1}+w_k| \leq T_{k+1}\wedge T_k |w_{k+1}-w_k| \leq  |F_{k+1} + F_k|\leq 2.	
	\end{align}
	Let $V_k:=\tilde{Z}(\tau_k)-c$, where $c\in \Reals^3$ is the centre of the scatterer hit in the $k$-th collision. Then, by the elastic collision rule
	\begin{align} \label{eq:collision}
		w_{k+1} = w_k -2 (V_k \cdot w_k) V_k.
	\end{align}
	Together with~\eqref{eq:flightanglebd} this yields
	\begin{align*}
		|w_k -(V_k \cdot w_k) V_k| \leq \frac18,
	\end{align*}
	which implies $V_k \cdot w_k \leq -\frac78$. Here we use that $v_k\cdot V_k\leq 0$ is necessarily negative, since a collision occurs. Inserting this inequality back into~\eqref{eq:collision} gives
	\begin{align*}
		n_{k+1}\geq n_k +\frac12 h_k,
	\end{align*}
	and finishes the proof. 
\end{proof}

 \subsection{Probability of scattering events}
\label{ssec:prob}

 On the set $\DD$ introduced in Definition~\ref{def:recollisions}, we consider the measure
\begin{align}
	\mu = \Uni(\St) \times \operatorname{EXP}(1|1) \times \Uni(\St), \label{eq:defmu}
\end{align}
where $\operatorname{EXP}(1|1)$ is the unit exponential distribution on $\Reals^+$ conditioned to  the interval $[0,1]$, i.e. the image measure on $[0,1]$ has the density
\begin{align*}
	\rho_{1,1}(x)=  \cf_{[0,1]} (x)\frac{e^{1-x}}{e-1} .
\end{align*}

A key point in~\cite{LT20} is played by the set of \emph{shadowing events} $\hat{A}_r$, where the second scatterer is placed in the path of the tracer particle before hitting the first obstacle, and~\emph{recollision events} $\tilde{A}$, where the tracer particle scatters back to the first scatterer after hitting the second. They can be formally defined as

\begin{align*}
	\hat{A}_r &:=\Bigl \{(u,\xi,v)\in \DD: \inf\{t:|\xi u +r \frac{u-v}{|u-v|}+te| <r \} <\infty \Bigr \},\\
	\tilde{A}_r &:=\Bigl  \{(u,\xi,v)\in \DD: \inf\{t:|\xi u +r \frac{u-e}{|u-e|}+tv| <r\} <\infty \Bigr \}.
\end{align*}
We will also use the related set $\tilde{A}_r'$ given by
\begin{align} \label{set:prime}
	\tilde{A}_r \subset \tilde{A}_r' &:= \Bigl \{(u,\xi,v)\in \DD: \angle(u,v)\leq 2 r \xi^{-1}\Bigr \}.
\end{align}
On the two sets respectively, we define $\hat{w}_r, \hat{\beta}_r$ and $\tilde{w}, \tilde{\beta}_r$ as
\begin{enumerate}
	\item $\hat{w}:=u$, $\hat{\beta}_r = 0$ (no trapping, second scatterer is ignored because of shadowing),
	\item $\tilde{w}$, $\tilde{\beta}_r$ are given by the trajectory of the tracer particle bouncing between the two obstacles, where $\tilde{\beta}_r$ is the time of the last collision and $\tilde{w}_r$ the outgoing velocity.
\end{enumerate}

We will improve the bounds on the likelihood of such events by estimating events with different numbers of collisions separately. Recall that for $r>0$ and $(u,\xi,v)\in \tilde{A}_r$, we denote by $N=N(r,u,\xi,v)$ the number of collisions of the tracer particle between the two obstacles.

In the rest of this section, we will derive the following geometric bounds which improve the estimates in Proposition~3 of \cite{LT20}.

	\begin{lemma} \label{lem:thmrescaled} Let $r=1$. Then for all $s> 0$ we have
	\begin{align}
		\lambda\bigl( (u,h,v)\in \tilde{A}: \angle(-e,\tilde w)< s^{-1}, \, h\leq 10 \bigr ) &\leq \frac{C}{1+s}, \label{resc:shortNany}\\
		\lambda\bigl( (u,h,v)\in \tilde{A}: \angle(-e,\tilde w)< s^{-1},\,  h\geq 10  , \, N=3 \bigr) &\leq \frac{C}{1+s}, \label{resc:longN3} \\
		\lambda\bigl( (u,h,v)\in \tilde{A}: \angle(-e,\tilde w)< s^{-1},\,  h\geq 10 , \, N\geq 4\bigr) &\leq \frac{C}{(1+s)^2}, \label{resc:longN4}\\
		\lambda\bigl( (u,h,v)\in \tilde{A}: \tilde \beta  >s, \, N =3 \bigr ) &\leq \frac{C}{1+s}, \label{resc:trapping3} \\
		\lambda\bigl( (u,h,v)\in \tilde{A}: \tilde \beta  >s , \, N\geq 4 \bigr) &\leq \frac{C}{1+s^2}, \label{resc:trapping4plus}
	\end{align}
	where the measure $\lambda$ is given by
	\begin{align*}
		\lambda = \Uni(\St) \times \operatorname{Leb}(\Reals^+) \times \Uni (\St)
	\end{align*}
	the uniform measure on $\DD$.
	\end{lemma}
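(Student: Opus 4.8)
The plan is to prove the five estimates by exploiting the geometric inequalities from Lemmas~\ref{lem:LTgeom} and~\ref{lem:LTimproved}, using part~\ref{it: uTimesv} of Lemma~\ref{lem:LTgeom} to convert the geometric conditions into a probability computation over the independent variables $w=\frac{u\times v}{|u\times v|}$, $\vartheta = |u\times v|$, and a residual angular variable. First I would handle \eqref{resc:shortNany}: on $\tilde A$ one has, by \eqref{set:prime}, $\angle(u,v)\leq 2h^{-1}$, which (combined with the Jacobian for passing from $v$ to $(w,\vartheta)$ and the density $\cf_{[0,1]}\tfrac{t}{\sqrt{1-t^2}}\,dt$ near $t=0$) already forces $h^{-1}$-type decay after integrating over the small cap of admissible $v$; the constraint $h\le 10$ makes the $h$-integral finite, and the $\angle(-e,\tilde w)<s^{-1}$ condition is used via \eqref{ineq:angle}, i.e. $\angle(n,\tilde w) > \tfrac\pi2 - \angle(-e,\tilde w) > \tfrac\pi2 - s^{-1}$, so $|n_N|=|\tilde w\cdot n|$ is small, of order $s^{-1}$; combining with \eqref{ineq:vertical}, which says $|n_2|\geq \tfrac12|e\cdot(u\times v)| = \tfrac12\vartheta|e\cdot w|$, and monotonicity $n_N\geq n_2$ when $n_1\ge 0$, yields that $\vartheta|e\cdot w|\lesssim s^{-1}$. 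Integrating this small-measure condition against $\lambda$ gives the $\tfrac{C}{1+s}$ bound.

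For \eqref{resc:longN3} the argument is the same with $h\ge 10$ and $N=3$: here only the Lutsko--T\'oth bounds are available (Lemma~\ref{lem:LTimproved} gives nothing new when $N=3$), so one reproduces their $\tfrac{C}{1+s}$ estimate using \eqref{ineq:beta}, \eqref{ineq:angle}, \eqref{ineq:vertical}. The improvement to $\tfrac{C}{(1+s)^2}$ in \eqref{resc:longN4} is the heart of the matter: with $N\ge 4$, $h\ge 10$ and $n_1\ge 0$, I would iterate the recursion \eqref{geometric}. Writing $a_k = h_k$, $b_k = n_k$, the coupled inequalities $h_{k+1}\geq h_k + \tfrac12\xi n_k$ and $n_{k+1}\geq n_k + \tfrac12 h_{k+1}$ give, after one or two steps, $n_3 \gtrsim \xi\, n_1 + \dots$, and more usefully a lower bound like $n_N \geq n_3 \gtrsim \xi\,(n_2 + h_2)$ or even a quadratic gain $n_N \gtrsim \xi^2 n_2$ for $N\ge 4$ — this is precisely the ``dispersive'' doubling behaviour of Sinai billiards. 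Then the smallness condition $|n_N|\lesssim s^{-1}$ (via \eqref{ineq:angle}) forces BOTH $\vartheta|e\cdot w| \lesssim s^{-1}$ (through $n_2$, \eqref{ineq:vertical}) AND a smallness on the in-plane geometry that is correlated with the $h$-variable, so that after integrating over $\xi\geq 10$ one picks up an extra factor $(1+s)^{-1}$. Concretely: for each fixed admissible $(w,\vartheta)$, the set of $\xi$ with $\xi^2\,\vartheta|e\cdot w| \lesssim s^{-1}$ has Lebesgue measure $\lesssim (s\,\vartheta|e\cdot w|)^{-1/2}$, and integrating $(\vartheta|e\cdot w|)^{-1/2}$ against the $\vartheta$-density $\tfrac{t}{\sqrt{1-t^2}}$ (which kills the singularity) and over $w\in\St$ yields $\tfrac{C}{(1+s)^2}$ once one also uses the $\vartheta|e\cdot w|\lesssim s^{-1}$ constraint — one trades half a power of $s$ from the $\xi$-integral and half from the residual cap, or organizes the bookkeeping so the two $s^{-1/2}$ factors from the two constraints multiply. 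The trapping-time bounds \eqref{resc:trapping3} and \eqref{resc:trapping4plus} then follow from \eqref{ineq:beta}, $\tilde\beta \leq \xi + |w_1|^{-1}$: since $|w_1|=1$, $\tilde\beta > s$ forces $\xi > s-1$, and one combines this with the recollision constraint $\angle(u,v)\leq 2\xi^{-1}\leq 2(s-1)^{-1}$ from \eqref{set:prime} — the measure of such $(u,v)$ is $\lesssim \xi^{-2}\lesssim s^{-2}$ when one has the extra geometric gain for $N\ge 4$, and only $\lesssim s^{-1}$ for $N=3$ (matching that Lutsko--T\'oth's bound is sharp there). Actually for $N=3$ one argues directly: $\tilde\beta\le \xi+1$, the event $\{\tilde\beta>s\}\cap\tilde A$ lives in $\{\xi>s-1,\ \angle(u,v)\le 2\xi^{-1}\}$ whose $\lambda$-measure is $\int_{s-1}^\infty (2\xi^{-1})^2\,d\xi \lesssim s^{-1}$, using that the uniform measure of a cap of angular radius $\delta$ on $\St$ is $\asymp \delta^2$; wait, that already gives $s^{-1}$, good. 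For $N\ge 4$ the same integral with the additional convexity constraint narrowing the $v$-cap by a further factor gives $s^{-2}$, hence $\tfrac{C}{1+s^2}$.

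The main obstacle I anticipate is making the quadratic gain in \eqref{resc:longN4} and \eqref{resc:trapping4plus} fully rigorous: the recursion \eqref{geometric} only lower-bounds $h$ and $n$, and to extract a genuine $\xi^2$ (not just $\xi$) improvement one must be careful that at least two ``productive'' collision steps occur — this is exactly why the hypothesis is $N\ge 4$ rather than $N\ge 3$ — and one must track how the smallness of $n_N$ propagates backwards to a joint smallness condition on $(\vartheta|e\cdot w|, \xi)$ whose $\lambda$-measure can be computed. A secondary technical point is the change of variables from $(u,\xi,v)\in\tilde A$ to $(w,\vartheta,\xi,\text{residual})$ and verifying that the Jacobian, together with the $\vartheta\sim\tfrac{t}{\sqrt{1-t^2}}\,dt$ law of part~\ref{it: uTimesv}, is integrable; this is where the absence of a $|\log r|$ factor (present in \cite{LT20}) is finally seen — the convexity-improved lower bound on $n_N$ removes the region of $\xi$ that was responsible for the logarithmic divergence. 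I would first nail down \eqref{resc:shortNany} and \eqref{resc:longN3} as a warm-up reproducing \cite[Proposition 3]{LT20}, then invest the bulk of the work in the iteration of \eqref{geometric} for the $N\ge 4$ estimates.
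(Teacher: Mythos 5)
Your outline of the change-of-variables to $(w,\vartheta)$ and the use of \eqref{ineq:angle}, \eqref{ineq:vertical} is consistent with the paper's proof, and \eqref{resc:shortNany} and \eqref{resc:trapping3} are handled essentially as in the paper. However, there is a concrete error for \eqref{resc:longN3}: you assert that ``Lemma~\ref{lem:LTimproved} gives nothing new when $N=3$'' and that one simply reproduces the Lutsko--T\'oth bound. This is not correct, and it is exactly the point where the logarithm is removed. Lemma~\ref{lem:LTimproved} applies for all $N\geq 3$, and for $N=3$ the recursion \eqref{geometric} at $k=2$ yields the strengthened lower bound
\begin{align*}
  n_3 \geq (1+\tfrac12 h)\,n_2 \geq \tfrac12(1+\tfrac12 h)\,|e\cdot(u\times v)| ,
\end{align*}
which the paper then feeds into the integral. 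The extra factor $h$ is essential: with only $n_3\geq n_2$ (as in Lutsko--T\'oth), the $\lambda$-integral over $h\in[10,\infty)$ against Lebesgue measure does not converge (it produces the very logarithm the lemma is designed to eliminate). So your plan for \eqref{resc:longN3} as written would not close.

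For \eqref{resc:longN4} and \eqref{resc:trapping4plus} your idea of iterating \eqref{geometric} to pick up an extra power of $h$ ($n_4\gtrsim h^2 n_2$) is the right direction and matches the paper's intent, but the execution differs: the paper integrates over $h$ last and, for \eqref{resc:trapping4plus}, combines the iterated bound $n_\ell\geq(\tfrac{h}{4})^{\ell-2}|e\cdot(u\times v)|$ with the constraint $n_{N-1}h\leq 1$ and the trapping-time bound $\tilde\beta\leq 2Nh$, estimates each fixed $N=k$ separately to get $C^k/s^{k-2}$, and then sums over $k\geq 4$. Your version, which integrates over $\xi$ first for fixed $(\vartheta,w)$, is not carried far enough to see that it lands on $(1+s)^{-2}$ rather than $(1+s)^{-1}$; in particular the coupling $\vartheta\leq 2/\xi$ from \eqref{set:prime} ties the $\xi$- and $\vartheta$-integrations together in a way your sketch does not resolve. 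The main thing to fix is your claim about \eqref{resc:longN3}: make explicit use of \eqref{geometric} there, not only for $N\geq 4$.
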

As in~\cite{LT20}, the Lemma above can be used to obtain estimates for the probabilities of collision events with respect to the measure $\mu$ as defined \eqref{eq:defmu}, by the same scaling argument outlined there.
\begin{corollary}
	The following improved geometric estimates hold
	\begin{align}
		\mu\bigl( (u,h,v)\in \tilde{A}_r: \angle(-e,\tilde w)< s^{-1}, \, h\leq 10 r  \bigr) &\leq \frac{Cr}{1+s}, \label{eq:shortNany}\\
		\mu\bigl( (u,h,v)\in \tilde{A}_r: \angle(-e,\tilde w)< s^{-1},\,  h\geq 10 r , \, N=3 \bigr) &\leq \frac{Cr}{1+s}, \label{eq:longN3} \\
		\mu\bigl( (u,h,v)\in \tilde{A}_r: \angle(-e,\tilde w)< s^{-1},\,  h\geq 10 r , \, N \geq 4 \bigr ) &\leq \frac{Cr}{(1+s)^2},\label{eq:longN4}\\
		\mu\bigl( (u,h,v)\in \tilde{A}_r: \tilde \beta  >s, \, N =3 \bigr ) &\leq \frac{Cr}{1+s}, \label{eq:trapN3}\\
		\mu\bigl( (u,h,v)\in \tilde{A}_r: \tilde \beta  >s, \, N \geq 4  \bigr) &\leq \frac{Cr}{1+s^2}. \label{eq:trapN4}
	\end{align}
\end{corollary}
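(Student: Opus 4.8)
The plan is to deduce the five estimates with respect to the conditioned measure $\mu$ from the corresponding estimates with respect to the Lebesgue-type measure $\lambda$ in Lemma~\ref{lem:thmrescaled}, via the scaling argument of~\cite{LT20}. The first step is to make the scaling map explicit: given $r>0$, define $\Sigma_r \colon \DD \to \DD$ by $\Sigma_r(u,\xi,v) = (u, r^{-1}\xi, v)$, i.e. one rescales only the free-flight variable. The key geometric observation, already present in Definition~\ref{def:recollisions} (the obstacle centres in~\eqref{eq:centers} scale linearly in $r$), is that the whole configuration of radius-$r$ obstacles with flight time $\xi$ is the image under dilation by $r$ of the radius-$1$ configuration with flight time $r^{-1}\xi$; consequently $(u,\xi,v)\in \tilde A_r$ if and only if $\Sigma_r(u,\xi,v)\in \tilde A_1 = \tilde A$, the number of collisions $N$ is unchanged, the outgoing velocity $\tilde w_r$ equals $\tilde w$ evaluated at $\Sigma_r(u,\xi,v)$ (velocities are scale-invariant), and the trapping time satisfies $\tilde\beta_r = r\,\tilde\beta\circ\Sigma_r$. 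In particular the threshold $h \le 10r$ in the $\mu$-statement corresponds under $\Sigma_r$ exactly to $h\le 10$ in the $\lambda$-statement, and $\tilde\beta_r > s$ corresponds to $\tilde\beta\circ\Sigma_r > s/r$.

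The second step is to track how the measures transform. Under $\Sigma_r$ the pushforward of $\lambda = \Uni(\St)\times\operatorname{Leb}(\Reals^+)\times\Uni(\St)$ is $r\,\lambda$ (the Lebesgue factor picks up the Jacobian $r$), so any $\lambda$-estimate immediately yields the same estimate with an extra factor $r$ for the radius-$r$ events. It remains to pass from $\lambda$ to the conditioned exponential measure $\mu$ of~\eqref{eq:defmu}: since $\operatorname{EXP}(1|1)$ has density $\rho_{1,1}(x) = \cf_{[0,1]}(x) e^{1-x}/(e-1)$ on $\Reals^+$, which is bounded above by $e/(e-1)$, one has $\mu \le \frac{e}{e-1}\,\lambda$ as measures on $\DD$ (after restricting the Lebesgue factor to $[0,1]$, which only shrinks the set). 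Hence every event measured by $\mu$ is bounded by a constant times its $\lambda$-measure, and combining this with the scaling identities and Lemma~\ref{lem:thmrescaled} gives each of~\eqref{eq:shortNany}--\eqref{eq:trapN4} directly: for instance, applying~\eqref{resc:trapping4plus} with $s$ replaced by $s/r$ yields $\lambda(\{\tilde\beta\circ\Sigma_r > s/r,\ N\ge 4\}\cap\tilde A) \le C/(1+(s/r)^2) \le Cr^2/(r^2+s^2) \le Cr^2/s^2$, and pulling back through $\Sigma_r$ costs a factor $r$ while replacing $\lambda$ by $\mu$ costs only a constant, so $\mu(\tilde A_r:\ \tilde\beta_r>s,\ N\ge4)\le Cr/(1+s^2)$ after absorbing the low-$s$ regime into the constant; the other four cases are identical, using that $\angle(-e,\tilde w_r)$ is scale-invariant.

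I do not expect a genuine obstacle here — this corollary is a bookkeeping step — but the one point that needs care is the matching of thresholds under the scaling: one must check that $h\le 10r$, $h\ge 10r$ translate to exactly the conditions $h\le 10$, $h\ge 10$ appearing in Lemma~\ref{lem:thmrescaled}, and that the trapping-time rescaling $\tilde\beta_r = r\,\tilde\beta\circ\Sigma_r$ is applied with the correct direction (so that $s$ becomes $s/r$, not $rs$), since an error here would change $r$-powers in the final bounds. One should also verify that conditioning the exponential to $[0,1]$ (rather than leaving it on all of $\Reals^+$) does not lose any of the relevant recollision events — but this is automatic because $\tilde A_r \subset \tilde A_r'$ forces $\angle(u,v)\le 2r\xi^{-1}$, which for the events in question is compatible with, and does not require, large $\xi$. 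With these checks in place the proof is a one-line application of Lemma~\ref{lem:thmrescaled} for each of the five inequalities, exactly as in~\cite{LT20}.
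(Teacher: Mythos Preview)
Your proposal is correct and is precisely the scaling argument from~\cite{LT20} that the paper invokes (without spelling out) to pass from Lemma~\ref{lem:thmrescaled} to the Corollary: rescale the free-flight variable by $r^{-1}$, use that velocities, $N$, and $\tilde w$ are scale-invariant while $\tilde\beta_r = r\tilde\beta$, and bound the conditioned exponential density by a constant times Lebesgue. The one slightly muddled sentence is your remark about the $[0,1]$-conditioning ``not losing events''---the correct (and simpler) point is just that restricting $h$ to $[0,1/r]$ after rescaling only shrinks the $\lambda$-measure, so the upper bound is preserved; the inclusion $\tilde A_r\subset\tilde A_r'$ is not needed here.
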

\begin{remark}
	The corresponding estimates in Proposition~3 of~\cite{LT20} feature an additional $|\log(s)|$ that the new estimates can avoid.
\end{remark}

\begin{proofof} [Proof of Lemma~\ref{lem:thmrescaled}]
	The proof of~\eqref{resc:shortNany} is the same as in~\cite{LT20} (Proof of (78) there), and the logarithm disappears since we restrict to $h\leq 10 $.
	
	\medskip

To bound~\eqref{resc:longN3}, we first observe that by~\eqref{ineq:angle}
\begin{align*}
	\angle(-e,\tilde w) \geq \frac{\pi}{2} -\angle(w,n) \geq \frac12 n_3 .
\end{align*}
The latter can be bounded below using~\eqref{geometric} and~\eqref{ineq:vertical}:
\begin{align} \label{ineq:n3}
	n_3 \geq  (v\cdot n) + \frac12 h (v\cdot n) \geq (1+\frac12 h) n_2 \geq \frac12 (1+\frac12 h) |e \cdot (u \times v) |  .
\end{align}
Inserting this above allows us to estimate
\begin{align*}
	&\lambda\bigl( (u,h,v)\in \tilde{A}: \angle(-e,\tilde w)< s^{-1},\,  h\geq 10, \, N=3 \bigr) \\
	\leq &\lambda\bigl( (u,h,v)\in \tilde{A}: \tfrac{1}{2} (1+\tfrac12 h) n_2 \leq s^{-1},\,  h\geq 10, \, N=3 \bigr)\\
	\leq &\lambda\bigl( (u,h,v)\in \tilde{A}': (1+\tfrac12 h) |e \cdot (u \times v) | \leq  4 s^{-1},\,  h\geq 10, \, N=3 \bigr),
\end{align*}
where in the last inequality we have also used the inclusion $\tilde{A}\subset \tilde{A}'$ in~\eqref{set:prime}. We are now in the position to estimate the last line, by making use of the representation $u\times v = \vartheta w $ in part~\ref{it: uTimesv} of Lemma~\ref{lem:LTgeom}. This reduces the claim to the integral bounds:
\begin{align}
	&\lambda\bigl( (u,h,v)\in \tilde{A}: \angle(-e,\tilde w)< s^{-1},\,  h\geq 10, \, N=3 \bigr) \nonumber\\
	\leq &C \int_{10}^\infty \int_{\mathbb S^2 }\int_0^{1/h} \cf_{\frac12 h|e\cdot W| \leq 4/(st)} t  \ud{t} \uS(W) \uud{h}  \nonumber\\
	\leq &C \int_{10}^\infty \int_0^{1/h}  \frac{8t}{1+8sht}  \ud{t}  \uud{h} 
    \leq C \int_{10}^\infty \frac{1}{sh}\int_0^{1/h}  \frac{8sht}{1+8sht}  \ud{t}  \uud{h}\nonumber\\
	\leq &C \int_{10}^\infty \frac{1}{sh^2}  \ud{h} \leq \frac{C}{1+s}.
\end{align}
The estimate~\eqref{resc:longN4} follows analogously, using that for $n_4$ we have the stronger inequality
\begin{align}\label{ineq:n4}
	n_4  \geq \tfrac12 h_4 \geq \tfrac14 h n_3 \geq \frac{1}{16} h^2 |e \cdot (u\times v)| .
\end{align}
The main improvement of our method lies in estimating~\eqref{resc:trapping3} and~\eqref{resc:trapping4plus}. First, we split into short- and long recollisions:
\begin{align*}
	\lambda\bigl( (u,h,v)\in \tilde{A}: \tau >s \bigr) &\leq \lambda\bigl( (u,h,v)\in \tilde{A}: \tau >s, \, h\leq 10\bigr) + \lambda\bigl( (u,h,v)\in \tilde{A}: \tau >s, \, h\geq 10\bigr).
\end{align*}
The first term can be estimated using the strategy of Lutsko and T\'oth. Due to the restriction $h\leq 10$ we do not obtain a logarithmic correction:
\begin{align*}
	\lambda\bigl( (u,h,v)\in \tilde{A}: \tau >s, \, h\leq 10\bigr) \leq \frac{C}{1+s}.
\end{align*} The other term we split further into
\begin{align*}
	&\lambda\bigl( (u,h,v)\in \tilde{A}: \tau >s, \, h\geq 10\bigr) \\
	= &\lambda\bigl( (u,h,v)\in \tilde{A}: \tau >s, \, h\geq 10, N=3\bigr)
	+\sum_{k=4}^\infty \lambda\bigl( (u,h,v)\in \tilde{A}: \tau >s, \, h\geq 10, N =k\bigr).
\end{align*}
We now turn to the proof of~\eqref{resc:trapping3}. Under the constraints $N =3$ and $h\geq 10$, we know that $\tau\leq 3h$, and therefore
\begin{align*}
	\lambda\bigl( (u,h,v)\in \tilde{A}: \tau >s, \, h\geq 10, N=3\bigr) \leq \int_{s/3}^\infty P( \angle(u,v) \leq (1+h)^{-1}) \leq \frac{C}{1+s}.
\end{align*}
Finally, the estimate~\eqref{resc:trapping4plus} for the contribution of higher number of collisions can be bounded using the iterated form of~\eqref{geometric}. We first observe that
\begin{align}
	n_\ell  \geq (\tfrac{h}{4})^{\ell-2} |e \cdot (u\times v)|.
\end{align}
We have $\tau \leq 2 N  h$ and $N$ collisions require
\begin{align*}
	n_{N-1} h \leq 1.
\end{align*}
Combining these estimates we obtain
\begin{align*}
	&\lambda\bigl( (u,h,v)\in \tilde{A}: \tau >s, \, h\geq 10, N =k\bigr)\\
	\leq &\int_{s/(2k)\wedge 10}^\infty  \int_{\mathbb S^2 }\int_0^{1/h} \cf\big((\frac{h}4)^{k-3} h|e\cdot W| \leq 1/t\big) t  \ud{t} \uS(W) \uud{h}\\
	\leq &C\int_{s/(2k)\wedge 10}^\infty  \int_0^{1/h}  \frac{ 4^k}{ h^{k-2} }   \ud{t} \uud{h}
	\leq C\int_{s/(2k)\wedge 10}^\infty   \frac{ 4^k}{ h^{k-1} }  \uud{h} \leq \frac{C^k}{s^{k-2}},
\end{align*}
where we use $k\geq 4$. Summing over $k\geq 4$ yields, for $s\geq 2C$ large enough
\begin{align*}
	\lambda\bigl( (u,h,v)\in \tilde{A}: \tau >s, \, h\geq 10, N \geq 4\bigr) \leq \frac{2C^2}{s^2}.
\end{align*}
Since we know the value is bounded for $0\leq s\leq C$, this finishes the proof.
\end{proofof}

\subsection{Exit distribution of simple direct recollisions} \label{ssec:indep}

The estimates in Lemma~\ref{lem:thmrescaled} identify the probability of recollision events with $N=3$, i.e. only one back-scattering, as having non-integrable decay in $s\geq 1$.
Hence, these events are one of the obstacles for the future goal of reaching timescales with $\lim_{r\rightarrow 0} T(t) r^2=0$, which will requires to avoid any logarithmic divergences. Since we believe the estimates to be sharp, we would like to examine the geometry of these collision events closer and give an idea how this problem could be overcome in the future. More precisely, we will show that the direct recollision events with very long flight times that are responsible for the logarithmic divergence essentially recouple to the free flight process, because we can prove that the exit velocity of the tagged particle becomes uniformly distributed and independent of the entry velocity as $\xi \rightarrow \infty$. 

With the notation of the last section, we recall that the exit distribution for $N=3$, is given by
\begin{align} \label{eq:colltransfered}
	\tilde{w} = v - 2 (v\cdot \omega) \omega,
\end{align}
where $\omega=\frac{Z(\tau_2)-a}{|Z(\tau_2)-a|}\in \mathbb{S}^2$ is the direction of the transfer of momentum.

\begin{lemma} \label{lem:independence}
	We have the following convergence for the conditional exit distribution in total variation norm:
	\begin{align}
		\sup_{A \subset \mathbb{S}^2} \Bigl| P(\tilde \omega \in A |\xi=R, \, u=\nu,  \, N=3) - \frac{|A|}{|\mathbb{S}^2|}\Bigr| \leq \frac{C}{1+R}.
	\end{align}
\end{lemma}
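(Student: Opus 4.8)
The plan is to reduce the total-variation convergence to an explicit integral estimate over the momentum-transfer direction $\omega$. Fix the entry velocity $u = \nu$ and the flight time $\xi = R$, and work under the conditioning $N=3$. The exit velocity is given by \eqref{eq:colltransfered}, $\tilde w = v - 2(v\cdot\omega)\omega$, so conditionally on $v$ the map $v \mapsto \tilde w$ is a reflection. The key geometric fact to extract is that, as $R\to\infty$, the constraint $N=3$ together with $(u,\xi,v)\in\tilde A$ forces $v$ to be extremely close to $u=\nu$: indeed $(u,\xi,v)\in\tilde A\subset\tilde A'$ gives $\angle(u,v)\le 2R^{-1}$ by \eqref{set:prime} (after rescaling; here $r=1$). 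So the second collision is nearly grazing and $v\approx\nu$ up to an error $O(R^{-1})$. The randomness that survives is the \emph{azimuthal} position of the contact point on the second sphere, i.e. the direction $\omega$; the claim is essentially that this azimuthal angle becomes equidistributed and, after the reflection, produces a $\tilde w$ that is uniform on $\mathbb S^2$ up to $O(R^{-1})$.

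The steps I would carry out are: (1) Write the conditional law of $(u,\xi,v)$ restricted to $\{N=3\}$ in terms of the base measure $\mu = \Uni \times \operatorname{EXP}(1|1)\times\Uni$, and use part \ref{it: uTimesv} of Lemma~\ref{lem:LTgeom} to change variables from $v$ to $(w,\vartheta)$ with $w=\frac{u\times v}{|u\times v|}$ uniform on $\mathbb S^2$ and $\vartheta = |u\times v| = \sin\angle(u,v)$ having the stated density. (2) Observe that conditioning on $\xi = R$ and $N=3$ restricts $\vartheta$ to a window of size $O(R^{-1})$ around $0$, on which its density $\frac{\vartheta}{\sqrt{1-\vartheta^2}}\,d\vartheta$ is, to leading order, $\vartheta\,d\vartheta$ — so the conditional law of the \emph{small parameter} $\vartheta$ is explicit up to $O(R^{-1})$, and crucially the azimuthal variable $w$ remains $\Uni(\mathbb S^2)$. (3) Express $\omega$ (equivalently the contact direction on the second sphere) as a function of $(u,\vartheta,w)$ and $R$: since the second sphere has radius $1$ and centre $b = \xi u + (u-v)/|u-v|$, the contact point direction $\omega = (Z(\tau_2)-a)/|Z(\tau_2)-a|$ can be computed from the incoming leg $v$ and the sphere geometry, and one checks that in the limit $\vartheta,\,R^{-1}\to 0$ the map $(w,\vartheta)\mapsto\omega$ (for the geometrically admissible configurations with $N=3$) is a diffeomorphism onto a full neighbourhood basis of $\mathbb S^2$ whose Jacobian is bounded above and below, with all derivatives in the error controlled by $O(R^{-1})$. (4) Push forward: the reflection $v\mapsto\tilde w$ composed with the near-equidistribution of $\omega$ yields that $\tilde w$ has a density with respect to $\Uni(\mathbb S^2)$ equal to $1 + O(R^{-1})$ uniformly, which on integration against $\cf_A$ gives the bound $\frac{C}{1+R}$.

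Concretely for step (3)–(4): when $\angle(u,v) = \vartheta + O(\vartheta^3)$ is of order $R^{-1}$, the tracer particle grazes the second sphere, the chord it cuts has length $\asymp 1$ (it is a near-diameter only if the impact parameter is small, otherwise the chord is short — but the requirement $N=3$, i.e. that after the second reflection the particle does \emph{not} hit the first sphere again, selects a definite range of impact parameters), and the reflected direction $\tilde w = v - 2(v\cdot\omega)\omega$ sweeps out directions ranging over essentially the whole sphere as $\omega$ ranges over the contact region and $w$ over $\mathbb S^2$. I would make this quantitative by computing, in a frame adapted to $u=\nu$, the differential of $(w,\vartheta)\mapsto\tilde w$ and showing it is invertible with uniformly bounded inverse on $\{\vartheta\le c R^{-1}\}\cap\{N=3\}$, the $O(R^{-1})$ corrections coming entirely from (a) the difference between $\vartheta$ and $\angle(u,v)$, (b) the displacement of $b$ from $\xi u$, and (c) the conditional density of $\vartheta$ versus $\vartheta\,d\vartheta$.

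\textbf{Main obstacle.} The hard part is step (3): pinning down precisely which impact parameters on the second sphere are consistent with exactly $N=3$ collisions — i.e. characterising the event $\{N=3\}$ in the grazing regime — and then verifying that the induced map to the exit direction $\tilde w$ is a bi-Lipschitz parametrisation of (a uniform-measure chunk of) $\mathbb S^2$ with $O(R^{-1})$-controlled distortion. This is a genuine computation in the three-dimensional billiard geometry of two unit spheres at distance $\asymp R$: one must rule out that the reflected ray re-enters $B_1(a)$ (which would force $N\ge 4$) on a set of $\omega$ of all-but-$O(R^{-1})$ measure, and simultaneously track the Jacobian. Once this geometric lemma is in hand, steps (1), (2) and (4) are routine changes of variables and the stated $\frac{C}{1+R}$ rate falls out from the size $O(R^{-1})$ of every correction term.
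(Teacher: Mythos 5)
Your proposal follows essentially the same route as the paper's proof: fix $u=\nu$, $\xi=R$, observe that the recollision constraint confines the post-second-collision direction to a cap of angular size $O(R^{-1})$, and compute the conditional density of the exit direction by a Jacobian argument, showing it equals $\frac{1}{4\pi}+O(R^{-1})$. The one point worth adding concerns what you single out as the main obstacle, namely characterising the event $\{N=3\}$ and the exact support of the exit law. The paper sidesteps this entirely: it writes the conditional density as $\rho_{\xi=R,u=\nu}(\tilde w)=\frac{1}{\mathcal{Z}}(1+2d\cos\alpha)^{-1}(1+2d/\cos\beta)^{-1}\cf_{\cos\alpha>0}$ with $d\approx R$ the length of the last flight, shows $\bigl|\rho-\frac{1}{\mathcal{Z}}\frac{1}{4R^2}\bigr|\leq \frac{C}{1+R}$ \emph{on the support} together with $\mathcal{Z}=\pi R^{-2}+O(R^{-3})$, and then deduces $\Uni(\St)(\{\tilde w\notin\supp\rho\})\leq \frac{C}{1+R}$ for free from the normalisation $\int\rho=1$ --- near-uniformity of the density on the support forces the support to have nearly full measure, so no bi-Lipschitz parametrisation of a precisely identified region of $\mathbb{S}^2$ is needed. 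With that observation your steps (1)--(4) reduce to the two asymptotics above, which is exactly the computation the paper records.
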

\begin{proof}
	Let $\xi=R$ and $u\in \mathbb{S}^2$ be arbitrary, and $N=3$. Then the exit velocity $\tilde{w}$ is given by~\eqref{eq:colltransfered}, where $\omega$ is a  function of $u,v$ and $\xi$. We compute the conditional probability density as
	\begin{align*}
		\rho_{\xi=R,u=\nu} (\tilde{w}) =  \frac{1}{\mathcal{Z}}(1+2d \cos \alpha )^{-1} (1+2d/\cos(\beta))^{-1} \cf_{\cos\alpha >0},
	\end{align*}
	where $d= |Z(\tau_2)- Z(\tau_1)|$ is the distance between the last two collisions, $\alpha$ is the angle between $-v$ and $\omega$, and $\beta= \angle(\omega,Z(\tau_1)-a)$. Moreover, $\mathcal{Z}$ is the surface volume
	\begin{align*}
		\mathcal{Z} = \int_{\mathbb{S}^2} \cf_{N(v,u,\xi)=3} \ud{v},
	\end{align*}
    where $N(v,u,\xi)$ is the number of collisions in the event as a function of the parameters. From this we obtain
	\begin{align*}
		\Bigl|\rho_{\xi=R,u=\nu} (\tilde{w}) - \frac{1}{\mathcal{Z}} \frac{1}{4R ^2}\Bigr| \leq \frac{C}{1+R}, \quad  \text{ for  $\tilde w \in \supp \rho_{\xi=R,u=\nu}$}.
	\end{align*}
	Since $\mathcal{Z}$ satisfies
	\begin{align*}
		\mathcal{Z} = \frac{\pi}{R^2} + O(\frac{1}{R^3}),
	\end{align*}
	we obtain
	\begin{align*}
		\Bigl|\rho_{\xi=R,u=\nu} (\tilde{w}) - \frac{1}{4\pi}\Bigr| \leq \frac{C}{1+R}, \quad \text{ for  $\tilde w \in \supp \rho_{\xi=R,u=\nu}$}.
	\end{align*}
	The convergence of the conditional density to the uniform density shows that $$\Uni(\St) (\{\tilde  w  \notin \supp \rho_{\xi=R,u=\nu}\})\leq \frac{C}{1+R}$$ as well, and the claim follows. 
\end{proof}
As discussed in the introduction, these events are not the only ones yielding error terms that become significant on timescales $T(t)\sim r^{-2} |\log r|^{-1}$. In particular some of these terms are generated by two subsequent scattering events, which could only be treated by considering a different auxiliary process with longer memory. For the moment, we only state these additional estimates on direct recollisions for application in future research.

\section{Application to the Invariance Principle of Lutsko and Toth} \label{sec:LT}

We show how the improved geometric estimates give the following proved validity of the invariance principle in \cite{LT20}. We use the notation $X^\eps$ for the empirical Random Lorentz gas, $Y$ for
the Markovian free flight process and $W$ for the standard Wiener process of variance $1$ in $\Reals^3$.

\begin{theorem} \label{thm:LT}
  Let $T=T(\eps)$  be such that $\lim_{\eps \to 0} T(\eps)=\infty $ and $\lim_{\eps \to 0} \eps^2 |\log \eps|T(\eps)=0$. Then, for any $\delta>0$,
  \begin{align}\label{eqn:thm2.1}
    \lim_{\eps \to 0} \mathbf{P}\Bigl(\sup_{0 \leq t \leq T}|X^\eps(t) - Y(t)|v> \delta \sqrt{T} \Bigr)& =0
  \end{align}
   and hence
   \begin{align}\label{eqn:thm2.2}
   \Bigl\{ t \mapsto T^{-1/2} X^\eps (Tt)  \Bigr\}& \Rightarrow  \Bigl\{ t \mapsto W(t) \Bigr\}
  \end{align}
   as $\eps \to 0$ in the averaged-quenched sense.
\end{theorem}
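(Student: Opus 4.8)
The plan is to follow the coupling scheme of Lutsko and T\'oth \cite{LT20} essentially verbatim, substituting our improved geometric estimates (the Corollary following Lemma~\ref{lem:thmrescaled}) at the precise places where the $|\log s|$ factors previously arose. Recall the structure: one couples the three processes $X^\eps$ (Lorentz gas), $Y$ (Markovian flight) and the auxiliary memory-one process $Z$ on a common probability space, so that $X^\eps$ and $Z$ agree unless a ``bad'' mismatch event occurs, while $Z$ and $Y$ agree unless a direct shadowing or recollision event occurs. The total number of collisions up to the macroscopic time $T$ is of order $T$ with overwhelming probability (a standard large-deviation estimate for the Markovian flight), so by a union bound the probability that $X^\eps$ and $Y$ disagree on $[0,T]$ is controlled by $T$ times the per-collision probability of a discrepancy event, and one must show this tends to $0$ when $\eps^2|\log\eps| T\to 0$.

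First I would isolate, in the LT error budget, exactly which contributions carry a $|\log s|$ or $|\log r|$. These are the terms bounded in Proposition~3 of \cite{LT20} concerning the trapping time $\tilde\beta$ and the near-backscattering angle $\angle(-e,\tilde w)$ of a direct recollision event; in \cite{LT20} these come out as $O(r^2|\log r|^2)$ after the $\xi$-integration because the relevant $s$-integral $\int^{...} \frac{|\log s|}{1+s}\,ds$ (or similar) produces a second logarithm. Substituting \eqref{eq:shortNany}--\eqref{eq:trapN4}, which replace $\frac{Cr|\log s|}{1+s}$ type bounds by $\frac{Cr}{1+s}$ (and by $\frac{Cr}{(1+s)^2}$, $\frac{Cr}{1+s^2}$ in the $N\ge4$ cases), the corresponding integrals converge to give $O(r^2|\log r|)$ rather than $O(r^2|\log r|^2)$. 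The cases $N\ge 4$ are in fact integrable in $s$ and contribute only $O(r^2)$. Everything else in the LT argument — the shadowing estimates, the recoupling of $Z$ to $Y$, the Markov-flight invariance principle, the Skorokhod-type passage from \eqref{eqn:thm2.1} to \eqref{eqn:thm2.2} — is used unchanged, since those terms are already $O(r^2)$ or $O(r^2|\log r|)$ in \cite{LT20} and we do not claim to improve them.

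Concretely the steps are: (i) recall the coupling construction and the decomposition of the mismatch event into shadowing events $\hat A_r$ and recollision events $\tilde A_r$, together with the ``mechanical consistency'' bookkeeping that guarantees no further discrepancy arises before the next collision; (ii) bound the expected number of collisions before macroscopic time $T$ by $CT$ with error probability $o(1)$; (iii) for a single collision, bound the probability of a discrepancy-generating event by the sum of the shadowing probability (unchanged, $O(r^2)$) and the recollision probabilities from the Corollary, using $s \sim$ (remaining macroscopic time)/(free flight scale) so that after integrating the per-collision bound against the collision times one gets a total of order $r^2|\log r| \cdot T = \eps^2|\log\eps| T \to 0$; (iv) conclude \eqref{eqn:thm2.1}, and then \eqref{eqn:thm2.2} follows from \eqref{eqn:thm2.1} together with the invariance principle for $Y$ exactly as in \cite{LT20}.

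The main obstacle is bookkeeping rather than new mathematics: one must verify that our estimates slot in at \emph{every} place a logarithm appeared in \cite{LT20} and that no hidden logarithm survives — in particular one must be careful that the accumulation over the $O(T)$ collisions of the non-integrable $\frac{1}{1+s}$ tails (the $N=3$ terms in \eqref{eq:longN3} and \eqref{eq:trapN3}) genuinely produces a single $|\log r|$ and not more, and that the interplay between short recollisions ($h\le 10r$, also non-integrable but logarithm-free by restriction) and long ones is summed correctly. A secondary point, flagged in Remark~\ref{rem:indirect}, is to confirm that on the timescale $\eps^2|\log\eps| T\to 0$ the \emph{indirect} inconsistencies (after two collisions) that the $Z$-process cannot see are still negligible; this is why the stated condition is the borderline one and no further improvement is possible with this coupling.
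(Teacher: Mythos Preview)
Your proposal is correct and follows essentially the same route as the paper: plug the improved geometric bounds of the Corollary into the Lutsko--T\'oth coupling at the places where the extra $|\log s|$ appeared, observe that the $N\ge 4$ contributions become integrable while the $N=3$ and short-recollision contributions yield a single $|\log r|$, and leave the rest of \cite{LT20} unchanged. The paper is a bit more surgical---it pinpoints the specific estimates (82), (83), (104), (107), (109), (110), (113) of \cite{LT20} that must be redone and rederives them one by one---whereas you describe the mechanism more globally; but the substance is the same, including your correct observation that indirect inconsistencies (Remark~\ref{rem:indirect}) are what make the new timescale sharp for this coupling.
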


The proof of the corresponding result in \cite{LT20} is the core part of their paper. A key ingredient is the construction of  an auxiliary $Z$-process via coupling with the same data which ignores interaction beyond direct shadowing and direct recollisions with the last seen scatterer. 
 The  difference between the Lorentz exploration process and the $Z$ process
is estimated in \cite[Prop. 1]{LT20} using various geometric estimates. Our improved geometric estimates above yield improved estimates for the auxiliary process: 
\begin{proposition} \label{prop1LT} There exists $C < \infty$ such that for arbitrary incoming and outgoing velocities
\begin{align}\label{eqn:49}
  \mathbf{P}\Bigl( \mathcal{X}(t)\not{\equiv} Z(t) : 0^- <t< \theta^+ \Bigr) & \leq C \eps^2 |\log \eps|.
\end{align}
\end{proposition}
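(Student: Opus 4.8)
The plan is to reduce Proposition~\ref{prop1LT} to the geometric estimates collected in the Corollary following Lemma~\ref{lem:thmrescaled}, exactly mirroring the argument in \cite[Prop.~1]{LT20} but inserting the sharper bounds. First I would recall the structure of the $Z$-process: between two consecutive collision events it is coupled to the Markovian flight process $Y$ with the same velocity/flight-time data, except that it keeps memory of the \emph{last} scatterer, so that the event $\{\mathcal{X}(t)\not\equiv Z(t)\text{ for some }0^-<t<\theta^+\}$ can only occur if, within the time window under consideration, one of the following \enquote{bad} local configurations is realized by the data $(u,\xi,v)$: a direct shadowing event (the set $\hat A_\eps$), a direct recollision with the last scatterer (the set $\tilde A_\eps$), or a mechanical inconsistency of the flight produced by such an event. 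So the first step is to write
\begin{align*}
  \mathbf{P}\bigl( \mathcal{X}(t)\not\equiv Z(t):0^-<t<\theta^+\bigr)
  \le \mu(\hat A_\eps) + \mu(\tilde A_\eps) + (\text{lower-order terms already controlled in }\cite{LT20}),
\end{align*}
where $\mu$ is the measure in~\eqref{eq:defmu} governing the (velocity, flight-time) data of a single collision step, and the total probability over the window of length $\theta=O(1)$ collisions is obtained by a union bound over the $O(1)$ collisions occurring in $(0^-,\theta^+)$ (the expected number of collisions in unit time is bounded since $\eps^2\rho\to1$).

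The second step is the estimate of $\mu(\hat A_\eps)$. This is the shadowing term and is \emph{not} improved here; one quotes the bound from \cite{LT20}, which is $O(\eps^2|\log\eps|)$, coming from integrating $\eps/(1+s)$ against the flight-time distribution up to the cutoff $s\sim \eps^{-1}$. The third, and central, step is the recollision term $\mu(\tilde A_\eps)$. Here one decomposes $\tilde A_\eps$ according to whether the two scatterers are close or far, $h\le 10\eps$ versus $h\ge 10\eps$, and in the far case further according to $N=3$ versus $N\ge 4$, and according to whether the exit velocity $\tilde w$ points nearly backwards (small $\angle(-e,\tilde w)$) or the trapping time $\tilde\beta$ is large. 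One then feeds in~\eqref{eq:shortNany}--\eqref{eq:trapN4}. The key point is that in \cite{LT20} the analogous estimates carried an extra factor $|\log s|$, so that after integrating against the flight-time measure and summing the geometric series over collisions one lost a further $|\log\eps|$, giving the old $\eps^2|\log\eps|^2$; with the log removed from the $N=3$ bounds (and genuine quadratic decay for $N\ge4$), the $s$-integrals converge (or diverge only logarithmically where a single $|\log\eps|$ is unavoidable), and summing over the $O(1)$ collision slots yields the bound $C\eps^2|\log\eps|$.

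I expect the main obstacle to be bookkeeping rather than a new idea: one has to verify that every way the coupling $\mathcal{X}\not\equiv Z$ can break down on $(0^-,\theta^+)$ is indeed captured by one of the geometric events estimated in Lemma~\ref{lem:thmrescaled}, with the correct identification of $s$ (the trapping time, resp. the reciprocal exit angle) in terms of the parameters of the coupling, and that the events with only a single $|\log\eps|$ (the $N=3$ recollisions with long flight time, and the shadowing term) really do only contribute at that order and not worse. Concretely, the delicate part is the treatment of the $N=3$ recollision: its probability density decays only like $1/(1+s)$, so naively $\int^{\eps^{-1}}\eps\,ds/(1+s)\sim \eps|\log\eps|$, and one must check that this is the \emph{only} place the surviving logarithm enters and that it is not squared. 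This is where Lemma~\ref{lem:independence} is morally relevant — it explains why these events do not destroy the coupling at higher order — but for the proof of Proposition~\ref{prop1LT} itself one only needs the probability bound, and the $\eps^2|\log\eps|$ (rather than $\eps^2$) is accepted as the current limitation. The remaining steps — the union bound over collisions, the passage from $\lambda$ to $\mu$ via the scaling in the Corollary, and quoting the unchanged lower-order terms from \cite{LT20} — are routine.
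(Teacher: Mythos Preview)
Your decomposition misidentifies which events break the coupling. The auxiliary process $Z$ is constructed precisely so that it \emph{agrees} with the mechanical process $\mathcal{X}$ on direct shadowing events $\hat A_\eps$ and direct recollisions $\tilde A_\eps$; that is the whole point of giving $Z$ memory of the last scatterer. Hence the mere occurrence of $\hat A_\eps$ or $\tilde A_\eps$ does not force $\mathcal{X}\not\equiv Z$, and the displayed inequality
\[
\mathbf{P}\bigl(\mathcal{X}\not\equiv Z\bigr)\le \mu(\hat A_\eps)+\mu(\tilde A_\eps)+(\text{lower order})
\]
is both logically unjustified and numerically useless: $\mu(\tilde A_\eps)$ is of order $\eps$ (take $s\to 0$ in~\eqref{eq:shortNany}--\eqref{eq:longN4}), not $\eps^2|\log\eps|$. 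What actually causes $\mathcal{X}\not\equiv Z$ are \emph{indirect} events---recollisions or shadowings involving a scatterer at least two steps in the past---and these can occur either with no direct event present ($\sum_j\eta_j=0$) or triggered by the modified trajectory after a direct event ($\sum_j\eta_j=1$). You have the roles inverted: the ``mechanical inconsistency of the flight produced by such an event'' that you relegate to ``lower-order terms already controlled in~\cite{LT20}'' is exactly the term that \emph{was} $\eps^2|\log\eps|^2$ there and that the improved geometry must fix.

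The paper's argument follows~\cite{LT20} and splits according to $\sum_j\eta_j\in\{0,1,\ge 2\}$. The cases $0$ and $\ge 2$ are already $O(\eps^2|\log\eps|)$ in~\cite{LT20}; only the case $\sum_j\eta_j=1$ (their estimate~(66)) needs improvement. There the improved bounds on $\angle(-e,\tilde w)$ and on the trapping time $\tilde\beta$ enter not as bounds on $\mu(\tilde A_\eps)$ itself, but inside the estimates~(82),~(83) of~\cite{LT20}, which control the \emph{conditional} probability that, given one direct recollision, the outgoing trajectory passes within distance $s$ of an earlier scatterer position. Concretely, the sharper angle bounds remove a $|\log s|$ from the integral corresponding to~\cite[(104)]{LT20}, and the sharper trapping-time bounds~\eqref{eq:trapN3}--\eqref{eq:trapN4} reduce the analogue of~\cite[(107)]{LT20} to $O(\eps^2)$; these then propagate through~\cite[(109),(110),(113)]{LT20} (which also involve Green's function estimates you have not mentioned) to give~\eqref{eqn:66}. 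Your sketch does not touch this conditional structure, and the heuristic integral $\int^{\eps^{-1}}\eps\,ds/(1+s)$ does not correspond to any step in the actual argument.
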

This improved Proposition is enough to obtain the longer time scales:

\begin{proofof}[Proof of Theorem \ref{thm:LT}] Using the same arguments as in \cite{LT20}, Proposition \ref{prop1LT}
allows an improvement of  the stopping time estimates \cite[Lemma 9]{LT20} to a timescale $\lim_{\eps \to 0} T(\eps)=\infty $
and $\lim_{\eps \to 0} \eps^2 |\log \eps|T(\eps)=0$. The other main ingredient  \cite[Lemma 10]{LT20} holds already on longer timescales. Both Lemmas together yield \eqref{eqn:thm2.1}, which then implies \eqref{eqn:thm2.2} due to the invariance principle of the Markovian flight process.
\end{proofof}

\begin{remark} \label{rem:indirect}
    The inequality~\eqref{eqn:49} is sharp in the sense that mismatches between the two processes do occur with probability $\sim r^2 |\log(r)|$. To see this, consider two consecutive free flights $Y_i=\xi_i \omega_i$ ($i=1,2$) given by independent random variables $\xi_i \sim \Exp(1)$, $\omega_i \sim \Uni(\St)$. Then the probability of entering an $\eps$ neighborhood of the origin after the two flights satisfies:
    \begin{align*}
        P(|Y_1+Y_2|\leq \eps) \geq c \eps^2 |\log \eps|.
    \end{align*}
    This means mechanical inconsistencies appear due to \emph{indirect} recollisions with this frequency, which cannot be taken into account by the auxiliary process $Z$. 
\end{remark}

It remains to show Proposition \ref{prop1LT}.   The key challenge is to improve \cite[(66)]{LT20}, which deals with the case that there is exactly one $k$ with direct shadowing  ($\hat{\eta}_j =1$)  or a direct recollision with the directly previous scatterer ($\tilde{\eta}_k=1$) but not both.  All other cases have already been estimated to be of higher order in \cite{LT20}:

\begin{lemma} \label{lem:66impliesprop}
  Suppose that the Lorentz exploration process and the auxiliary process are constructed as in \cite{LT20} with $\gamma \in \Naturals$ different intervals. Furthermore, assume
  \begin{align}\label{eqn:66}
     \mathbf{P}\Bigl( \{\mathcal{X}(t)\not{\equiv} Z(t) : 0^- <t< \theta^+\} \cap \{  \sum_j^\gamma \eta_j =1, \} \Bigr) & \leq C \eps^2 |\log \eps|,
  \end{align}
    then Proposition \ref{prop1LT} holds.
\end{lemma}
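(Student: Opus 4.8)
The plan is to reduce the full statement of Proposition~\ref{prop1LT} to the single-event estimate \eqref{eqn:66} by a union bound over the structure of the collision history on the time interval $(0^-,\theta^+)$. Following the bookkeeping of \cite{LT20}, the event $\{\mathcal{X}(t)\not\equiv Z(t)\}$ can only occur if at least one of the indicators $\eta_j = \hat\eta_j \vee \tilde\eta_j$ (direct shadowing or direct recollision with the last seen scatterer) is triggered on one of the $\gamma$ intervals. Hence I would first write
\begin{align*}
 \mathbf{P}\bigl( \mathcal{X}(t)\not\equiv Z(t) : 0^-<t<\theta^+ \bigr)
 &\leq \mathbf{P}\Bigl( \{\mathcal{X}\not\equiv Z\}\cap\{\textstyle\sum_j^\gamma \eta_j = 1\}\Bigr)
 + \mathbf{P}\Bigl( \textstyle\sum_j^\gamma \eta_j \geq 2 \Bigr),
\end{align*}
so that the first term is exactly the hypothesis \eqref{eqn:66} and it remains to control the probability of two or more triggering events.

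For the second term I would argue that a single triggering event has probability $\order{\eps^2|\log\eps|}$ — this is exactly what the improved geometric estimates of the Corollary after Lemma~\ref{lem:thmrescaled} give, after integrating the bounds \eqref{eq:shortNany}–\eqref{eq:trapN4} against the law of the free-flight data and summing the contributions of the $\order{|\log\eps|^{0}}$-many relevant collision pairs, precisely as in \cite{LT20}. Because successive free flights are (conditionally) independent in the exploration process and a triggering event on interval $j$ is determined by data that is essentially independent of the data on a disjoint interval $j'$, the probability of two simultaneous triggering events factorizes up to constants, giving $\mathbf{P}(\sum_j\eta_j\geq 2)\leq C(\eps^2|\log\eps|)^2 \cdot \gamma^2 = \order{\eps^4|\log\eps|^2\,\gamma^2}$. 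Since $\gamma$ is chosen (as in \cite{LT20}) to be at most polylogarithmic in $\eps$, this is $o(\eps^2|\log\eps|)$ and is therefore negligible. The same reasoning shows that the events already treated in \cite{LT20} — indirect recollisions, long recollision chains, and mixed multi-event configurations — are of strictly higher order, so they do not affect the bound; I would simply cite the relevant parts of \cite{LT20} rather than reproduce them.

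Assembling these pieces gives $\mathbf{P}(\mathcal{X}\not\equiv Z : 0^-<t<\theta^+)\leq C\eps^2|\log\eps| + o(\eps^2|\log\eps|) \leq C'\eps^2|\log\eps|$, which is \eqref{eqn:49}, and the independence of incoming and outgoing velocities is inherited from the corresponding uniformity in \cite{LT20} together with Lemma~\ref{lem:independence}. The main obstacle I anticipate is not the union bound itself but verifying that the constant and the combinatorial factor from summing over the $\gamma$ intervals and over the collision number $N$ genuinely stay bounded — i.e. that the decay in $s$ supplied by Lemma~\ref{lem:thmrescaled} (in particular the non-integrable $N=3$ term \eqref{resc:longN3}) is still enough when coupled against the exponential free-flight law and the stopping-time truncation $\theta$. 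This is where the improvement from $|\log\eps|^2$ to $|\log\eps|$ is actually realized, so care is needed to track which $|\log\eps|$ comes from the $N=3$ recollision tail and which from the number of intervals, and to confirm that only one factor of $|\log\eps|$ survives.
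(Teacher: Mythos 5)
Your decomposition strategy is the same as the paper's — split the mismatch event by the value of $\sum_j^\gamma \eta_j$, use \eqref{eqn:66} for the $=1$ case, and dispose of the remaining cases — but you have omitted a case, and the omission rests on a claim that is not correct. You assert that $\{\mathcal{X}\not\equiv Z\}$ ``can only occur if at least one of the indicators $\eta_j$ is triggered'', so that the complement of $\{\sum_j\eta_j=1\}$ reduces to $\{\sum_j\eta_j\ge 2\}$. In \cite{LT20} the relevant estimate is a three-way split — their (64), (65), (66) — according to whether $\sum_j\eta_j$ is $>1$, $=0$, or $=1$. The case $\sum_j\eta_j=0$ is not vacuous: the indicators $\hat\eta_j,\tilde\eta_j$ only flag \emph{direct} shadowing and \emph{direct} recollisions with the immediately preceding scatterer, while the Lorentz exploration process can still desynchronize from $Z$ through \emph{indirect} effects (recollisions with, or shadowing by, scatterers further back in the history) which the $Z$-process by construction does not see. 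This is precisely the mechanism highlighted in Remark~\ref{rem:indirect}: indirect inconsistencies occur with probability $\sim \eps^2|\log\eps|$, so the $\sum_j\eta_j=0$ contribution is of the same order as the target bound and cannot be discarded. The paper's proof handles it simply by noting that the existing \cite{LT20} bound (65) is already of order $\eps^2|\log\eps|$, i.e.\ ``consistent with Proposition~\ref{prop1LT}'', and similarly for (64); no new estimate is needed there.

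Your treatment of $\{\sum_j\eta_j\ge 2\}$ — a factorization heuristic giving $(\eps^2|\log\eps|)^2\gamma^2$ — is also not what the paper does; it simply invokes the already-established bound (64) of \cite{LT20}. Your heuristic is plausible but unverified (the $\gamma$ intervals share the same scatterer environment, so independence needs justification), and in any case it is unnecessary given that the bound is already available. The fix to your argument is straightforward: replace the two-way split by the three-way split on $\{\sum_j\eta_j >1\}$, $\{\sum_j\eta_j=0\}$, $\{\sum_j\eta_j=1\}$, cite \cite{LT20}'s (64) and (65) for the first two, and use the hypothesis \eqref{eqn:66} for the third.
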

\begin{proof}
  The proof of the corresponding proposition in \cite{LT20} is based on bounds \cite[(64),(65),(66)]{LT20} for the cases that $\sum_{j=1}^\gamma  \eta_j$  is  $>1$, $=0$ and $=1$ respectively. The first two cases have already bounds consistent with Proposition \ref{prop1LT}. The new bound in \eqref{eqn:66} then yields the improved  Proposition \ref{prop1LT}.
\end{proof}
Next, we outline the necessary changes to obtain \eqref{eqn:66} and reduce the order to $\eps^2 |\log \eps|$. As a first step, we improve the estimates (82) and (83) in \cite[Corollary 2]{LT20}.
\begin{lemma} There exist a constant $C < \infty$ such that
  \begin{align} \nonumber
  &\mathbf{P}\Bigl(\{ \hat{\eta}_k=0\}  \cap \{ \tilde{\eta}_k=1\} \cap
  \{\min_{\tau_{k-2} \leq t \leq \tau_k} \left| Z^{(k)}(t)- Z^{(k)}(\tau_{k-3}) \right| <s \}\Bigr)\\
  &  \quad  \leq C \eps \max( s,\eps |\log \eps|), \label{eqn:82} \\ \nonumber
 & \mathbf{P}\Bigl( \{ \hat{\eta}_k=0\}  \cap \{ \tilde{\eta}_k=1\} \cap
  \{\min_{\tau_{k-3} \leq t \leq \tau_{k-1} + \tilde{\beta}} \left| Z^{(k)}(t)- Z^{(k)}(\tau_{k-3}) \right| <s \} \Bigr) \\& \quad \leq C \eps \max( s | \log s| ,\eps |\log \eps|). \label{eqn:83}
\end{align}
\end{lemma}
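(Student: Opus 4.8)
The plan is to deduce both \eqref{eqn:82} and \eqref{eqn:83} from the rescaled geometric estimates of Lemma~\ref{lem:thmrescaled}, by the same route that \cite{LT20} use for their Corollary~2, but feeding in the sharper non-logarithmic bounds. The event $\{\hat\eta_k=0\}\cap\{\tilde\eta_k=1\}$ means there is no direct shadowing but there is a genuine recollision of the tracer with the $(k-1)$-th scatterer after the $(k-1)$-th collision, so the relevant triple $(u,\xi,v)$ (with $u=w_{k-1}$, $v=w_k$, $\xi=\xi_k$ the free flight) lies in $\tilde A_r$. The quantity $\min_t|Z^{(k)}(t)-Z^{(k)}(\tau_{k-3})|<s$ is, up to the independent incoming leg $Y_{k-2}=\xi_{k-2}\omega_{k-2}$ arriving at $\tau_{k-3}$, a statement that the recollision trajectory passes within $s$ of a fixed point; the first step is therefore to condition on everything before the first collision of the pair and integrate out the last free flight $\xi_{k-2}$ and angle, which converts the ``passes within $s$ of a point'' condition into a constraint of the form $\angle(-e,\tilde w)\lesssim s/\tilde\beta$ together with the trapping time $\tilde\beta$ not being too large — exactly the two families of events controlled by \eqref{eq:shortNany}--\eqref{eq:trapN4}.

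Concretely, for \eqref{eqn:82} I would split according to the number of collisions $N$ of the recollision pair and according to whether the separation $h$ between the two scatterers is $\le 10r$ or $\ge 10r$, mirroring the case distinction in Lemma~\ref{lem:thmrescaled}. For the bulk of the probability one writes $\min_t|Z^{(k)}(t)-Z^{(k)}(\tau_{k-3})|<s$ as the union over the relevant legs of the trajectory; the dominant contribution is the long final leg of length $\sim\tilde\beta$ emanating in direction $\tilde w$ from a point at distance $O(\tilde\beta)$, whose $s$-tube hitting the fixed target forces $\angle(-e,\tilde w)\le C s/\tilde\beta$, and then one integrates the tail of $\tilde\beta$ against these angle bounds. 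Using \eqref{eq:shortNany} for $h\le 10r$ and \eqref{eq:longN3}, \eqref{eq:longN4}, \eqref{eq:trapN3}, \eqref{eq:trapN4} for $h\ge 10r$, each piece contributes at most $C\eps s$ from the ``tube through a point'' mechanism, while the residual contribution where the extra free flight $\xi_{k-2}$ is itself atypically short (so the target is reached directly, without help from the angle of $\tilde w$) is exactly the mechanism that produced the $\eps\cdot\eps|\log\eps|$ term already in \cite{LT20} via Remark~\ref{rem:indirect}; taking the maximum of the two gives $C\eps\max(s,\eps|\log\eps|)$. For \eqref{eqn:83} the only change is that the minimum now runs over the slightly longer window $[\tau_{k-3},\tau_{k-1}+\tilde\beta]$, which includes the leg $w_{k-1}$ of random length $\xi_k$; the $s$-tube around that leg passing through the fixed point costs an extra $|\log s|$ upon integrating $\xi_k\sim\Exp(1)$ over $(0,\infty)$ (the standard ``line through a point'' logarithm, cf. Remark~\ref{rem:indirect}), which is why the bound degrades to $C\eps\max(s|\log s|,\eps|\log\eps|)$.

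The main obstacle I anticipate is bookkeeping rather than a genuinely new idea: one must correctly translate the abstract events $\{\min_t|Z^{(k)}(t)-Z^{(k)}(\tau_{k-3})|<s\}$ into the parametrisation $(u,\xi,v)\in\tilde A_r$ of Definition~\ref{def:recollisions}, keeping track of which legs of the piecewise-linear trajectory can realise the minimum and which scattering parameters are still free after the conditioning, and then verifying that the decay in $s$ and $\eps$ coming out of Lemma~\ref{lem:thmrescaled} is uniform in the incoming/outgoing velocities. In particular one has to check that the case $N=3$, which by \eqref{eq:longN3} and \eqref{eq:trapN3} has only $1/(1+s)$ (i.e. non-integrable) decay, does not spoil the estimate: here the saving is that in \eqref{eqn:82}, \eqref{eqn:83} the target point $Z^{(k)}(\tau_{k-3})$ is reached only via the short incoming leg $Y_{k-2}$ for these configurations, producing the $\eps|\log\eps|$ term and not worse. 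Once this translation is in place, the displayed integrals are the same ones already evaluated in the proof of Lemma~\ref{lem:thmrescaled}, so no further computation is needed.
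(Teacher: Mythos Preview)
Your overall strategy---feed the sharper bounds \eqref{eq:shortNany}--\eqref{eq:trapN4} into the skeleton of \cite[Corollary~2]{LT20}---is exactly what the paper does, and it works. However, your identification of \emph{which} integral absorbs \emph{which} improvement is not aligned with the actual computation, and following your description literally would make the argument clumsier than necessary.

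For \eqref{eqn:82} the paper does not couple the exit angle to the trapping time as you propose (``$\angle(-e,\tilde w)\le Cs/\tilde\beta$, then integrate the tail of $\tilde\beta$''). The only term in \cite{LT20} requiring revision is their (104), the single integral
\[
\int_{\tilde A_\eps}\min\!\Bigl(\frac{s}{\angle(-e,\tilde w)},\,1\Bigr)\,d\mu(u,h,v),
\]
evaluated directly by splitting at $\angle(-e,\tilde w)=s$ and using only the angle bounds \eqref{eq:shortNany}--\eqref{eq:longN4}; the trapping-time estimates play no role in \eqref{eqn:82}, and the outcome is $C\eps s$. The residual $\eps^2|\log\eps|$ comes from \cite[(102)]{LT20}, which the paper explicitly notes cannot be improved---this is not the indirect-recollision mechanism of Remark~\ref{rem:indirect}, though the orders happen to coincide.

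For \eqref{eqn:83} the $s|\log s|$ term was already present in \cite{LT20} and is left untouched; your attribution of it to ``integrating $\xi_k\sim\Exp(1)$ along the leg $w_{k-1}$'' misplaces where the logarithm sits. The actual improvement enters only through the trapping-time contribution \cite[(107)]{LT20}, namely $\mathbf P\bigl(\{\hat\eta_k=0\}\cap\{\tilde\eta_k=1\}\cap\{\xi_k<4\tilde\beta\}\bigr)$, which the paper now bounds by $C\eps^2$ thanks to the $1/(1+s^2)$ decay in \eqref{eq:trapN4}. So the correct partition of labour is: angle bounds $\Rightarrow$ \eqref{eqn:82}, trapping bounds $\Rightarrow$ \eqref{eqn:83}, with everything else inherited verbatim from \cite{LT20}.
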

\begin{proof}
  We outline the steps that need to be changed. For \eqref{eqn:82}, we need to consider \cite[(104)]{LT20}, where the main term to estimate is:
\begin{align}\nonumber&
\int_{\tilde{A}_\eps}  \min(\frac{s}{\angle(-e,\tilde w)},1) d \mu(u,h,v)\\& \leq
\int_{\tilde{A}_\eps,\angle(-e,\tilde w) \leq s } d \mu(u,h,v) + \int_s^\pi \int_{\tilde{A}_\eps, \angle(-e,\tilde w) =x} \frac{s}{x} d \mu(u,h,v) d  x \nonumber
 \\
  &\leq  C  \eps s +   \int_s^\pi \int_{\tilde{A}_\eps, \angle(-e,\tilde w) \leq x} \frac{s}{x} d \mu(u,h,v) d  x \leq C  \eps s +   C s \eps \int_s^\pi  \frac{1}{1+\frac{1}{x}} \frac{1}{x} d x\nonumber
 \\
  & = C  \eps s + C s \eps \left[ \log(1+\pi)-  \log(1+s) \right] \leq C \eps s,\label{eqn:104}
\end{align}
where we used \eqref{eq:shortNany}, \eqref{eq:longN3} and \eqref{eq:longN4} to estimate the angles. All other estimates to prove  \cite[Corollary 2, (82)]{LT20}  are unchanged.

For \eqref{eqn:83}, we need to adapt \cite[(107)]{LT20}, where the term yielding squared logarithmic corrections involves the trapping time $\beta$ using \eqref{eq:trapN3} and \eqref{eq:trapN4}. After scaling with $\eps$ we are integrating over the collision time parameterised by $x$, the volume of the outgoing velocities can be bounded by  $ \frac{1}{1+ (x/\eps)^2} $, then we obtain 
\begin{align}\nonumber
   &\mathbf{P}\Bigl( \{ \hat{\eta}_k=0\}  \cap \{ \tilde{\eta}_k=1\} \cap
  \{\xi_k<4 \beta\} \Bigr) \\ \nonumber
   \leq  &C \int_{0}^{\infty} \frac{\eps}{1+x} \exp(-x) \frac{1}{1+ (x/\eps)^2}  dx \\ \nonumber
   \leq &  C \int_{0}^{\infty} \frac{\eps}{1+ (x/\eps)^2}    dx\\ 
   \leq &C \eps^2. \label{eqn:107}
\end{align}
We do not change any other estimates, such that we obtain \eqref{eqn:83}.
\end{proof}
It is not easily possible to improve (102) in \cite{LT20}, such that we cannot remove the other $|\log \eps|$ term in \eqref{eqn:82} despite obtaining better estimates here.
We are now in the position to sketch how to provide the final missing estimate.
\begin{lemma}
  There exist a constant $C < \infty$ such that such that \eqref{eqn:66} holds.
\end{lemma}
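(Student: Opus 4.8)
The plan is to follow the structure of \cite[Section on Prop.~1]{LT20} and assemble the single-event estimate \eqref{eqn:66} from the improved building blocks just established, namely \eqref{eqn:82} and \eqref{eqn:83}, together with the unchanged estimates from \cite{LT20}. The event $\{\mathcal{X}(t)\not\equiv Z(t):0^-<t<\theta^+\}\cap\{\sum_{j}^\gamma\eta_j=1\}$ decomposes according to which index $k$ carries the single flagged event and whether it is a direct shadowing ($\hat\eta_k=1$) or a direct recollision with the previous scatterer ($\tilde\eta_k=1$). The shadowing case is already treated in \cite{LT20} with a bound of order $\eps^2$ (no logarithm), so the only work is the recollision case $\{\hat\eta_k=0\}\cap\{\tilde\eta_k=1\}$. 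For a mismatch to occur between $\mathcal{X}$ and $Z$ on this case, the Lorentz process must, after the trapping event between the two scatterers seen at collisions $k-1$ and $k$, return to the vicinity of an earlier scatterer (the one at $\tau_{k-3}$ being the generic obstruction); this is precisely the scenario quantified by \eqref{eqn:82} and \eqref{eqn:83}, once one sums over the possible earlier scatterers and over $k$.

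The key steps, in order, would be as follows. First, I would recall from \cite{LT20} the exact list of geometric sub-events whose union contains $\{\mathcal{X}\not\equiv Z\}\cap\{\tilde\eta_k=1,\hat\eta_k=0\}$: these are the events that the trapped trajectory, or one of the two flights bracketing the trapping, comes within distance $\eps$ of a previously placed scatterer centre. Second, I would apply \eqref{eqn:82} with $s=\eps$ to the ``short memory'' sub-event (hitting the scatterer at $\tau_{k-3}$ via the trajectory on $[\tau_{k-2},\tau_k]$), obtaining a contribution $C\eps\max(\eps,\eps|\log\eps|)=C\eps^2|\log\eps|$, and \eqref{eqn:83} with $s=\eps$ to the ``long memory'' sub-event involving the trapping time, obtaining $C\eps\max(\eps|\log\eps|,\eps|\log\eps|)=C\eps^2|\log\eps|$; the refined bound \eqref{eqn:107} shows the trapping-time contribution itself is only $C\eps^2$, so the surviving logarithm comes solely from the part of \eqref{eqn:83} inherited unchanged from \cite[(102)]{LT20}. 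Third, I would sum these over the $O(1)$ relevant earlier scatterers per collision and over the $\gamma$ collisions (using that $\gamma\eps^2\lesssim\eps^2|\log\eps|$ in the relevant regime, or more precisely absorbing the sum into the Poisson-intensity bookkeeping exactly as in \cite{LT20}), which preserves the order $\eps^2|\log\eps|$. Fourth, I would invoke the shadowing estimate from \cite{LT20}, which contributes at worst the same order, and collect all pieces to conclude \eqref{eqn:66}.

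The main obstacle, as the paragraph preceding the lemma already flags, is that one cannot push the recollision contribution below $\eps^2|\log\eps|$: the estimate \cite[(102)]{LT20} feeding into \eqref{eqn:83} genuinely produces a logarithm, and independently the indirect-recollision mechanism described in Remark~\ref{rem:indirect} shows $P(|Y_1+Y_2|\le\eps)\gtrsim\eps^2|\log\eps|$, so $\eps^2|\log\eps|$ is the correct order for this single-event term and no further improvement is available without changing the auxiliary process. Thus the proof is essentially a careful bookkeeping exercise: the real content is in \eqref{eqn:82}, \eqref{eqn:83} and the geometric estimates of Lemma~\ref{lem:thmrescaled}, and the present lemma only has to verify that plugging $s=\eps$ into those bounds and summing over collisions does not generate an extra logarithmic factor. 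I would therefore keep the exposition short, citing \cite{LT20} for the decomposition of the bad event and for the shadowing case, and spelling out only the substitution $s=\eps$ in \eqref{eqn:82}--\eqref{eqn:83} and the summation over $k$.
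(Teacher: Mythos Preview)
Your approach is essentially the same as the paper's: follow the proof of \cite[(66)]{LT20} and substitute the improved \eqref{eqn:82}--\eqref{eqn:83} wherever \cite[(82),(83)]{LT20} were used. The paper's proof is equally brief and does exactly this, pointing to the two specific places in \cite{LT20} where the substitution happens: \cite[(109)]{LT20} improves directly to $C\gamma\eps^2|\log\eps|$, and \cite[(110)]{LT20} improves via the intermediate estimate \cite[(113)]{LT20}.

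One point where your description is slightly off: you speak of summing over ``$O(1)$ relevant earlier scatterers per collision'', but in fact all earlier scatterers must be checked, and in \cite{LT20} the contribution of the distant ones (those beyond $\tau_{k-3}$) is controlled in \cite[(110)]{LT20} by combining \eqref{eqn:82}--\eqref{eqn:83} with Green's function estimates for the free-flight process. The paper flags this explicitly. Your plan to ``absorb the sum into the Poisson-intensity bookkeeping exactly as in \cite{LT20}'' presumably covers this, but it is worth naming the Green's function input rather than suggesting only finitely many scatterers matter. Also, $\gamma$ is a fixed integer parameter, so the factor $\gamma$ in $C\gamma\eps^2|\log\eps|$ is harmless and need not be argued away via $\gamma\eps^2\lesssim\eps^2|\log\eps|$.
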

\begin{proof}
  The improvements to \cite[(82),(83)]{LT20} by \eqref{eqn:82} and \eqref{eqn:83} then directly improve \cite[(109)]{LT20} to $C \gamma \eps^2 |\log \eps|$.
The final missing ingredient  are the estimates in \cite[(110)]{LT20}, these involve Green's functions estimates as well as estimates involving \eqref{eqn:82} and \eqref{eqn:83}, which directly yield improved estimates in \cite[(113)]{LT20}, which yield estimates of order $\eps^2 |\log \eps|$ for \cite[(110)]{LT20} and hence for \cite[(66)]{LT20}, i.e. \eqref{eqn:66} holds.
\end{proof}

This ensures the condition for Lemma \ref{lem:66impliesprop}, which yields Proposition \ref{prop1LT}, which in turn yielded the invariance principle in Theorem~\ref{thm:LT}.

\subsection*{Acknowledgements}
This work was partially supported through  The Leverhulme Trust research project grant  RPG-2020-107.  The authors would like to thank the Isaac Newton Institute for Mathematical Sciences, Cambridge, for support and hospitality during the programme \emph{Frontiers in kinetic theory}  where this work was initiated.

\bibliographystyle{plain}
\bibliography{LorentzKR.bib}

\end{document}